\definecolor{greyC}{RGB}{180,180,180}
\definecolor{greyL}{RGB}{235,235,235}
\definecolor{citeColor}{RGB}{0,20,115}
\def\mod{\mathbf{W}} 
\def\tmod{\mathbf{T}} 
\def\noi{\mathcal{N}} 
\def\ournoi{\tilde{\mathcal{N}}}
\def\noimatri{\mathbf{N}}
\def\pca{\mathcal{F}_{\mathsf{PCA}}}
\def\subsp{\mathbf{B}}
\def\subspall{\hat{\mathbf{B}}}
\def\nos{\mathsf{nos}}
\def\vsvd{\mathbf{V}}
\def\cla{\mathcal{C}}
\def\mecha{\mathcal{M}}
\def\aux{\mathsf{aux}}
\def\var{\mathsf{Var}}
\def\egv{\mathsf{egv}}
\def\svec{\mathsf{SVec}}
\def\nnet{\mathsf{NNet}}
\newtheorem{theorem}{Theorem}
\newtheorem{definition}{Definition}
\def\ie{\textit{i.e.}\xspace} 
\def\eg{\textit{e.g.}\xspace}
\def\etal{\textit{et~al.}\xspace}
\def\dphero{\mathsf{DP\text{-}Hero}}
\newcommand{\yuzh}[1]{{\color{magenta} #1}}
\title{Revisiting Privacy-Utility Trade-off for DP Training with\\Pre-existing  Knowledge}
\author{
    Yu Zheng\textsuperscript{\rm 1}\thanks{Partial work was done when Yu was at Chinese University of Hong Kong and Hong Kong Polytechnic University.},
    Wenchao Zhang\textsuperscript{\rm 3},
    Yonggang Zhang\textsuperscript{\rm 2}\thanks{Corresponding author.},
    Wei Song\textsuperscript{\rm 3},
    Yuxiang Peng\textsuperscript{\rm 3},\\
    Kai Zhou\textsuperscript{\rm 4},
    Xiaojiang Du\textsuperscript{\rm 5},
    Bo Han\textsuperscript{\rm 6}
}
\begin{document}

\maketitle

\begin{abstract}
Differential privacy (DP) provides a provable framework for protecting individuals by customizing a random mechanism over a privacy-sensitive dataset.
Deep learning models have demonstrated privacy risks in model exposure as an established learning model unintentionally records membership-level privacy leakage.
Differentially private stochastic gradient descent (DP-SGD) has been proposed to protect training individuals by adding random Gaussian noise to gradient updates in backpropagation.
Researchers identified that DP-SGD causes utility loss, as the homogeneous noise injected can alter the gradient updates calculated at each iteration. Namely, all elements in the gradient are contaminated regardless of their importance in updating model parameters.
In this work, we argue that the utility can be optimized by involving the heterogeneity of the injected noise.
Consequently, we propose a generic \underline{d}ifferential \underline{p}rivacy framework with \underline{he}te\underline{ro}geneous noise ($\dphero$) by defining a heterogeneous random mechanism to abstract its property.
The insight of $\dphero$ is to leverage the knowledge encoded in the previously trained model to guide the subsequent allocation of noise heterogeneity, thereby leveraging the statistical perturbation and achieving enhanced utility.
Atop $\dphero$, we instantiate a heterogeneous version of DP-SGD and further extend it to federated training.
We conduct comprehensive experiments to verify and explain the effectiveness of the proposed $\dphero$, showing improved training accuracy compared with state-of-the-art works.
Broadly, we shed light on improving the privacy-utility space by learning the noise guidance from the pre-existing leaked knowledge encoded in the previously trained model, showing a different perspective of understanding the utility-improved DP training.
\end{abstract}



\section{Introduction}
\label{intro}

Deep learning has achieved remarkable success across a wide spectrum of domains~\cite{cvpr/LuLZL022,iclr/ZhangGL000S022,viswanathan2023prompt2model,zhao2024self}, primarily relying on the massive data used for model training.
As training data has been thoroughly analyzed to optimize model performance, a significant privacy concern arises regarding the model's potential to memorize individual data points~\cite{nips/BidermanPSSAPR23, uss/CarliniHNJSTBIW23,sp/LukasSSTWB23,acsac/0021ZCPTLY23}. 
Indeed, a growing body of studies~\cite{sp/ShokriSSS17,ccs/HitajAP17,uss/CarliniTWJHLRBS21} have demonstrated that it is feasible to identify the presence of a particular record or verbatim texts  in the training dataset, thereby raising severe privacy concerns.

Differential privacy (DP)~\cite{tcc/DworkMNS06,ccs/AbadiCGMMT016,iclr/PapernotSMRTE18,sp/Yu0PGT19}, emerged as de facto protection, can provide provable security for individuals’ privacy by adding the i.i.d noise to the sensitive data or computations.  
In detail, DP guarantees statistical indistinguishability between the outputs of two random mechanisms, which originate from the private inputs with or without a substituted individual data point. 
To protect sensitive data used in the training process, differentially private stochastic gradient descent (DP-SGD)~\cite{ccs/AbadiCGMMT016} has been proposed and regarded as a main-steam method.
The idea of DP-SGD is to add the homogeneous noise sampled from a Gaussian distribution to the aggregated gradient derived from a batch of examples in every training iteration.
Accordingly, DP-SGD, serving as the most popular baseline, can thwart an adversary from attaining membership of a particular data record when the adversary dissects an established model.

Subsequently, researchers identified the inherent trade-off between privacy and utility introduced by DP-SGD.
It is a well-known challenge to achieve high model utility/performance given meaningful DP guarantees~\cite{nips/TangPSM23,kdd/LeeK18,ccs/MohammadyXHZ0PD20,sp/Yu0PGT19,tit/GengV16,osdi/LuoPTCGL21} since acquiring strong protection realized by a large noise scale generally leads to unavoidable utility loss and performance degrading.
For example, the number of DP-SGD training iterations may increase by $10\times$ towards a similar utility metric compared with the pure SGD.
Accordingly, a research line of works~\cite{kdd/LeeK18,ccs/MohammadyXHZ0PD20,sp/Yu0PGT19,tit/GengV16}  explored to acquire a better utility by flexibly and empirically calibrate privacy budget allocation.
Regarding composition theorem, they try to either reallocate/optimize the privacy budget~\cite{kdd/LeeK18,sp/Yu0PGT19,tit/GengV16,osdi/LuoPTCGL21,nips/YangZZ0PL023} or modify the clip-norms~\cite{corr/abs-1908-07643,corr/abs-1812-02890}  of a (set of) fixed noise distribution(s) in each iteration.
These dynamic noise allocation solutions optimize the noise allocation in the range of the whole training process with a preset budget, but employs homogeneous noise at each iteration to perturb gradient updates.

Upon studying the iteration-wise utility with/without DP noise in the process of model convergence, we observe that  {utility loss can be ascribed to the homogeneity of noise applied to gradients.}
Regardless of the diverse features learned from the training data, homogeneous noise negatively contributes to training performance (\eg, convergence ability and accuracy) due to perturbing the original gradients derived in the backpropagation.
Drawing inspiration for dynamic noise allocation approaches, we believe introducing a noise heterogeneity view to the dynamic noise allocation approach will shed light on improving the privacy-utility space. 
Thus, we raise a fundamental question,

\emph{How do we improve the privacy-utility trade-off of DP-SGD by introducing the heterogeneous noise?}

\subsection{Technical Overview}

We consider a novel route of crafting iteration-wise noise heterogeneity by making use of pre-existing knowledge contained in the neural network, which captures the feature attributes prior-learned from the training data, thus improving the utility of the established model at every iteration.
The intuition is to dynamically allocate noise heterogeneity to diverse features in the back-propagation of SGD, in which the noise heterogeneity is guided by the prior learned knowledge contained in the existing model.
To this end, we propose a new framework -- differential privacy with heterogeneous noise ($\dphero$), guided by an iteration-wise guidance matrix derived from prior learned model parameters, to perturb the gradients derived in the backpropagation.
Specifically, we raise the following contributions progressively.

\smallskip
 \textbf{1) Allocating noise heterogeneity via pre-existing knowledge.}
To generate the model-guided heterogeneity, we propose a novel dynamic noise allocation scheme, where an iteration-wise (for short, stateful) matrix $\svec^{(t-1)}$ is computed using the pre-existing model at $(t-1)$-th iteration. With the notion of stateful $\svec^{(t-1)}$, we can guide the noise heterogeneity at the $t$-th training iteration. 
Namely, the stateful $\svec$ adjusts the noise $n$ used to perturb gradient updates at every iteration according to the heterogeneity derived by the $\svec^{(t-1)}$.
Consequently, the posterior random mechanism is guided by pre-existing knowledge encoded in prior model parameters at every training iteration.
Specifically, we consider a random mechanism named as $\dphero$ that adds  heterogeneous noise $\svec^{(t-1)} \cdot\mathbf{n}$ to gradients $\mathbf{g}^{(t)}$: 
$\mecha^{(t)}=\mathbf{g}^{(t)} +\svec^{(t-1)} \cdot\mathbf{n}$, where the abstraction of $\svec^{(t-1)}$ is independent to knowledge extraction function $\mathcal{F}$ of learned model $\nnet$ and indexed by states $t-1,t$.

For theoretical analysis, we abstract the notion of heterogeneous DP learning with stateful guidance for allocating noise heterogeneity.
By adopting composition~\cite{tcc/BunS16,csfw/Mironov17} and R\'enyi Divergence, we provide theoretical analysis on $\dphero$ SGD training following conventional proof style.
Accordingly, the instantiation of $\dphero$ SGD, regarded as a modified variant of standard DP-SGD, attains the standard DP guarantee.

\smallskip
\textbf{2) Constructing heterogeneous DP-SGD.}
We instantiate $\dphero$ as a heterogeneous version of DP-SGD, where the noise injected into gradient updates is heterogeneous. Specifically, the stateful $\svec^{(t-1)}$ at the $(t-1)$-th training iteration is derived from decomposition on model parameters $\mod^{(t-1)}$ at the prior training iteration, capturing the pre-existing knowledge.
Knowledge involved in $\mod^{(t-1)}$, serving as allocation guidance, propagates to the DP noise injected to gradients at the $t$-th training iteration, following the style of DP-SGD.
Accordingly, it captures the pre-existing statistical knowledge of private training data,  extracting heterogeneity applied to features. 
Later, the stateful guidance matrix $\svec^{(t-1)}$ adjusts the parameters of Gaussian distribution, equivalently affecting the heterogeneity of noise added to diverse features in the back-propagation of SGD.
Prior knowledge from extracted features has been reasonably DP-protected, thus not incurring extra risks in exposing private data.
The plug-in $\dphero$ SGD is generic and independent of learning models, best for both worlds in performance and privacy.
To demonstrate the generalization of $\dphero$, we present an extension of FedFed~\cite{nips/YangZZ0PL023} by replacing its SGD optimizer with $\dphero$ SGD.
This modification enables FedFed to achieve rigorous DP guarantees for client data while retaining its scalability and communication efficiency.

For test accuracy, $\dphero$ improves a series of state-of-the-arts, notably, from $95\%$ to $98\%$ over standard DP-SGD.
For training over the CIFAR-10 dataset, $\dphero$ improves by $18\%$-$47\%$.
We tested the convergence stability when adding small and large, showing that $\dphero$ could mitigate model collapse. 
At last, we visualize the DP-protected features during the training to explain $\dphero$'s superior performance.

\subsection{Contribution Summary} 

Overall, our contributions are summarized as follows.
\begin{enumerate}
	\setlength{\itemsep}{0pt}
	\item To form a step forward, we explore the relationship between DP training performance and heterogeneity at an iteration.
	Accordingly, we shed new light on bridging model utility and DP heterogeneity allocation to enhance the performance-privacy space.
    \item We propose a framework -- $\dphero$, supporting utility-improved training at every iteration by applying heterogeneous noise to model updates in back-propagation. 
	We abstract a guidance vector derived from pre-existing knowledge learned by models to guide noise heterogeneity applied to model back-propagation.
	Then, we formalize $\dphero$ and then provide theoretical analyses and proofs.
    \item To apply $\dphero$ SGD, we present an extension of FedFed~\cite{nips/YangZZ0PL023} with formal DP guarantee in federated learning.
	\item  Our $\dphero$ SGD is general and efficient, which could be adopted as a plug-in module. 
	$\dphero$ SGD could converge in fewer training iterations and mitigate the utility loss of the established model without relying on extra manual efforts. 
	Experiments and explanations confirm the superior improved privacy-utility trade-off. 
\end{enumerate}


\section{Preliminary}
\label{sec::prelim}
\subsection{General Notion of Differential Privacy}
Differential privacy (DP)~\cite{tcc/DworkMNS06,fttcs/DworkR14} theoretically guarantees individual privacy that the algorithm’s output changes insignificantly (see Definition~\ref{sensi}) if the inputting data has small perturbations.
Pure $\epsilon$-differential privacy is difficult to achieve in realistic learning settings, whereas the seminal work~\cite{ccs/AbadiCGMMT016}  training with SGD adopts approximate ($\epsilon, \delta$)-differential privacy, formally defined below.

\begin{definition}[Differential Privacy]
\label{def::dp}
	 A randomized mechanism $\mecha$ provides $(\epsilon,\delta)$-differential privacy if for any two neighboring datasets $D$ and $D'$ that differ in
	a single entry, $\forall S \subseteq Range(\mecha)$,
		\begin{equation}
		\mathrm{Pr}(\mecha(D)\in S)\leq e^\epsilon \cdot\mathrm{Pr}(\mecha(D')\in S)+\delta
			\end{equation}
	where $\epsilon$ is the privacy budget and $\delta$ is the failure probability. 
\end{definition}
\begin{definition}[Sensitivity] 
	\label{sensi}
	The sensitivity of a query function $\mathcal{F}:\mathbb{D}\rightarrow\mathbb{R}$ for any two neighboring datasets $D,D'$  is,
	 \begin{equation}
			\Delta = \max_{D,D'}\|\mathcal{F}(D)-\mathcal{F}(D')\|,
		 \end{equation}
	where $\|\cdot\|$ denotes $L_1$ or $L_2$ norm.
\end{definition}

Next, we introduce the definition of privacy loss~\cite{fttcs/DworkR14} on an outcome $o$ as a random variable when DP operates on two adjacent databases $D$ and $D'$.
Privacy loss is a random variable that accumulates the random noise added to the algorithm/model.
\begin{definition}[Privacy Loss~\cite{fttcs/DworkR14}]
	\label{def::priloss}
	Let $\mecha: \mathbb{D} \rightarrow \mathbb{R}$ be a randomized mechanism with input domain $D$ and range $R$.
	Let $D,D'$ be a pair of adjacent datasets and $\aux$ be an auxiliary input. For an outcome $o\in\mathbb{R}$, the privacy loss at $o$ is defined by,
	\begin{equation}
\mathcal{L}_{\mathsf{Pri}}^{(o)}\triangleq\log\frac{{\rm Pr}[\mecha(\aux,D)=o]}{{\rm Pr}[\mecha(\aux,D')=o]}
	\end{equation}
\end{definition}
where $\mathcal{L}_{\mathsf{Pri}}$ is a random variable on $r(o;\mecha,\aux,D,D')$, \ie, the random variable defined by evaluating the privacy loss at an outcome sampled from $\mecha(D)$.
Here, the output of the previous mechanisms is the auxiliary input $\aux$ of the mechanism $\mecha^{(t)}$ at $t$.

\subsection{DP Stochastic Gradient Descent}
DP-SGD~\cite{ccs/AbadiCGMMT016}, regarded as a landmark work, has been proposed to safeguard example-level model knowledge encoded from the training data, constrained by the privacy budget allocated at each training iteration.
As reported by DP-SGD, adding i.i.d. noise inevitably brings parameter perturbation over the learned model in practice. 
Research efforts such as~\cite{nips/TangPSM23, kdd/LeeK18, ccs/MohammadyXHZ0PD20, sp/Yu0PGT19, tit/GengV16, osdi/LuoPTCGL21} are focused on developing techniques that can provide stronger privacy guarantees while minimizing the loss of utility from various perspectives, \eg, clipping value optimization and privacy budget crafting. 
Zhou \etal~\cite{iclr/ZhouW021} improves utility by projecting the noisy gradients to a low-dimensional subspace, while $\dphero$ explores reusing the DP protected statistical knowledge learned from the private data. 

In DP learning, neighboring datasets $D,D'$ represent two datasets that
only differ by one training data point, while the $\mecha$ is the DP training algorithm.
Following the formality of the definition, the $\epsilon$ is an upper bound on
the loss of privacy, and the $\delta$ is the probability of breaking the privacy guarantee. 
DP-SGD is a differentially private version of stochastic gradient descent (SGD).
This approach adds noise to SGD computation during training to protect private training data.
The first step is to minimize the empirical loss function $\mathcal{L}(\theta)$ parameterized by $\theta$.
Secondly, gradient $\nabla_{\theta}\mathcal{L}(\theta,x_i)$ is computed at each step of the SGD, given a random subset of data $\bf x_i$.
The noise is added to the average gradients of $\nabla_{\theta}\mathcal{L}(\theta,x_i),\forall x_i$.
After training, the resulting model accumulates differentially private noise of each iteration to protect private individual data.

Through revisiting DP-SGD, we explore explaining the root
 of utility loss and bridge the concept of model-knowledge guidance and DP, making a DP training process fitting to enhance privacy-utility trade-offs better.
We showcase new thinking -- not employing auxiliary (e.g., public data) assistance for the higher model utility, and thus rethinking tolerant leakage (statistical knowledge, not membership, aligning standard DP definition) encoded in the prior DP-trained model.

\subsection{R\'enyi Differential Privacy}
R\'enyi differential privacy~\cite{csfw/Mironov17} has been proposed as a natural relaxation of differential privacy, particularly suitable for composing privacy guarantee of heterogeneous random mechanisms derived from algorithms. 
zCDP~\cite{tcc/BunS16} and R\'enyi DP~\cite{csfw/Mironov17} (RDP) are defined through R\'enyi Divergence by Bun~\etal~\cite{tcc/BunS16} for a tight analysis, thereby providing accumulating cumulative loss accurately and strong privacy guarantees.
Definition~\ref{def::renyidiv} presents the R\'enyi Divergence~\cite{csfw/Mironov17} for defining the R\'enyi differential privacy~\cite{csfw/Mironov17} as Definition~\ref{def:rdp}.

\begin{definition}[R\'enyi Divergence~\cite{csfw/Mironov17}]
	\label{def::renyidiv}
	For two probability distributions $P$ and $Q$  over $\mathbb{R}$,  R\'enyi divergence of order $\alpha$ is 
		\begin{equation}
			\mathcal{D}\alpha(P\|Q)\triangleq \dfrac{1}{\alpha-1}\log \mathbb{E}_{x\sim Q}[(\frac{P(x)}{Q(x)})^\alpha]
			\end{equation}
\end{definition}

Compared to standard differential privacy, R\'enyi differential privacy is more robust in offering an operationally convenient and quantitatively accurate way of tracking cumulative privacy loss throughout the execution of a standalone differentially private mechanism, such as iterative DP-SGD.
It supports combining the intuitive and appealing concept of a privacy budget by applying advanced composition theorems for a tighter analysis.
In return, an $(\alpha,\epsilon)$-R\'enyi DP implies $(\epsilon_\delta,\delta)$-DP for any given probability $\delta>0$ as Theorem~\ref{th:rdp-dp}.
We adopt the aforementioned DP advances to formalize DP with heterogeneous noise, devise the heterogeneous noise version of DP-SGD, and develop corresponding theoretical analyses.
\begin{definition}[R\'enyi Differential Privacy~\cite{csfw/Mironov17}]
\label{def:rdp}
    A randomized mechanism $\mecha \mathcal{D}:\mapsto\mathcal{R}$ is said to have $\epsilon$-R\'enyi differential privacy (RDP) of order $\alpha$ or $(\alpha,\epsilon)$-RDP for short, if for any adjcent $D,D'$,  R\'enyi divergence of random mechanisms satisfies that,
    \begin{equation}
        \mathcal{D}_\alpha(\mecha(D)|\mecha(D')\leq \epsilon
    \end{equation}
\end{definition}

\begin{theorem}[From RDP to $(\epsilon,\delta)$-DP~\cite{csfw/Mironov17}]
\label{th:rdp-dp}
    If $\mecha$ is an $(\alpha,\epsilon)$-RDP mechanism, it also satisfies $(\epsilon+\frac{\log 1/\delta}{\alpha-1},\delta)$-DP for any $0<\delta<1$.
\end{theorem}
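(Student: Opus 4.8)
The plan is to follow the standard tail-bound route for RDP-to-DP conversion: first turn the bounded R\'enyi divergence into a high-probability bound on the privacy loss, then observe that such a bound is exactly what $(\epsilon',\delta)$-DP requires. Fix a pair of adjacent datasets $D,D'$ and let $P$ and $Q$ denote the output distributions (densities or pmfs) of $\mecha(D)$ and $\mecha(D')$ respectively; the hypothesis $\mathcal{D}_\alpha(\mecha(D)\|\mecha(D'))\le\epsilon$ rewrites, for $\alpha>1$, as $\mathbb{E}_{x\sim Q}[(P(x)/Q(x))^\alpha]\le e^{(\alpha-1)\epsilon}$.

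First I would reduce $(\epsilon',\delta)$-DP (for a threshold $\epsilon'$ to be fixed later) to a tail statement. Let $B=\{x:P(x)>e^{\epsilon'}Q(x)\}$ be the ``bad'' set. For an arbitrary measurable $S\subseteq Range(\mecha)$, split $\Pr[\mecha(D)\in S]=\Pr_{x\sim P}[x\in S\setminus B]+\Pr_{x\sim P}[x\in S\cap B]$. On $S\setminus B$ the pointwise bound $P(x)\le e^{\epsilon'}Q(x)$ gives $\Pr_{x\sim P}[x\in S\setminus B]\le e^{\epsilon'}\Pr_{x\sim Q}[x\in S\setminus B]\le e^{\epsilon'}\Pr[\mecha(D')\in S]$, while the second term is at most $\Pr_{x\sim P}[x\in B]$. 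Hence it suffices to prove $\Pr_{x\sim P}[x\in B]\le\delta$. To bound this tail I would use a moment (Markov) argument: since $\alpha>1$, $x\in B$ is equivalent to $(P(x)/Q(x))^{\alpha-1}>e^{(\alpha-1)\epsilon'}$, so $\Pr_{x\sim P}[x\in B]\le e^{-(\alpha-1)\epsilon'}\,\mathbb{E}_{x\sim P}[(P(x)/Q(x))^{\alpha-1}]$. The one computational step is the change of measure $\mathbb{E}_{x\sim P}[(P(x)/Q(x))^{\alpha-1}]=\mathbb{E}_{x\sim Q}[(P(x)/Q(x))^{\alpha}]\le e^{(\alpha-1)\epsilon}$, which combined with the previous line gives $\Pr_{x\sim P}[x\in B]\le e^{(\alpha-1)(\epsilon-\epsilon')}$.

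To finish, I would solve $e^{(\alpha-1)(\epsilon-\epsilon')}\le\delta$ for $\epsilon'$, obtaining $\epsilon'\ge\epsilon+\tfrac{\log(1/\delta)}{\alpha-1}$; taking $\epsilon'=\epsilon+\tfrac{\log(1/\delta)}{\alpha-1}$ then yields $\Pr[\mecha(D)\in S]\le e^{\epsilon'}\Pr[\mecha(D')\in S]+\delta$ for every $S$ and every adjacent pair, \ie exactly $(\epsilon',\delta)$-DP. I expect the only delicate points to be bookkeeping rather than conceptual: keeping the direction of every inequality straight, carrying the standing assumption $\alpha>1$ (the value $\alpha=1$ reduces R\'enyi divergence to KL and is excluded), and the measure-theoretic edge case where $P$ is not absolutely continuous with respect to $Q$ --- but a finite value of $\mathcal{D}_\alpha$ with $\alpha>1$ already forces $P\ll Q$, so the set $\{x:Q(x)=0\}$ has $P$-measure zero and can be ignored. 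All the substance sits in the change-of-measure identity plus Markov; the rest is packaging.
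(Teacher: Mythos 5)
Your proof is correct: it is the standard tail-bound conversion (change of measure to bound $\mathbb{E}_{P}[(P/Q)^{\alpha-1}]$ by the R\'enyi moment, Markov's inequality on the privacy-loss, then the split of $S$ over the bad set $B$), and your handling of the $\alpha>1$ and absolute-continuity edge cases is fine. The paper itself does not prove this theorem --- it cites it from Mironov --- but the argument you give is exactly the one the paper replays later in its proof of the composition theorem (Markov on $e^{(\alpha-1)\mathcal{L}_{\mathsf{Pri}}}$ followed by the $\mathrm{Pr}[O\in S\wedge \mathcal{L}_{\mathsf{Pri}}\le\epsilon]+\delta$ decomposition), so you are on the same route.
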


\subsection{Security Model for Centralized DP Learning}
As for the security model, we consider a typical client-server paradigm of DP training. 
 The client, owning a set of private training data, trains a model conditioned on her private data, while the server receives the established model that is well-trained by the client, \ie, in a black-box manner.
The client trains a model conditioned on her data and sends the resulting model only to a remote server.
Assume a server is a malicious adversary, observes the final model, and tries to learn the existence of individual data.
Regarding Definition~\ref{def::dp}, privacy guarantee means resisting the server's inference on a particular record by viewing a differentially private model.
Our security model follows the privacy goal and adversary abilities that are the same as existing works since knowledge extraction is from the protected features on the client side.
$\dphero$ does not break existing settings or use any auxiliary data, thus incurring no extra privacy leakages to the server.


\section{Noise Heterogeneity in DP}
\label{sec::3}
To explore the noise heterogeneity, we start by adjusting the noise scale added to different elements, followed by witnessing the training process. 
Through repeated attempts, we observe that noise heterogeneity, \ie, the diverse noise scales added to the elements, can affect the training performance.
Accordingly, our idea is that prior model parameters (involving extracted elements with traditional DP protection) can guide the posterior random mechanism to improve training performance.
In the meantime, no privacy-sensitive element beyond DP protection is involved in yielding guidance.
Unlike dynamic allocation, we offer distinctive element-wise noise at each training step rather than scaling noise in a whole training process.

\subsection{Heterogeneity Guidance}
\label{sec::hete_guide}
Adding homogeneous DP noise uniformly across all model parameters often leads to suboptimal utility due to a mismatch between the injected noise magnitude and the underlying parameter sensitivity. 
Model parameters typically exhibit substantial heterogeneity in their magnitudes, gradient sensitivities, and contributions to overall task loss; applying the identical noise scale irrespective of these factors disproportionately corrupts low-norm or highly sensitive parameters, severely reducing their effective signal-to-noise ratio.
Conversely, larger or less-sensitive parameters can tolerate greater perturbations without substantial impact on convergence. 
Heterogeneous noise schemes address this issue by varying the injected noise based on structural properties such as parameter norm scaling, per-layer sensitivity, or gradient clipping statistics.
By preserving a more uniform effective SNR across parameters, heterogeneous noise enables models to maintain higher expressive power and faster convergence under the same global DP guarantee, yielding significantly improved privacy-utility tradeoffs.

We revisit reasonable leakages in DP models and make use of the pre-existing knowledge learned in the current model parameters to improve subsequent DP training performance.
Model training starts with an initial random model $\nnet^{(0)}$ towards a convergent model $\nnet^{(T)}$, which captures knowledge learned from data iteration by iteration.
Naturally, our idea is to introduce a scalar vector $\svec$ that is derived from the learned knowledge in $\nnet$ in the prior training process to serve as the guidance for subsequent DP training.

\subsection{Define Heterogeneous DP Learning}
\label{sec::def_hrm}
Consider a function $\mathcal{F}$ to denote functionality achieved by neural networks $\nnet$.
The $\nnet^{(t)}$, trained with the  DP mechanism, denotes the deep learning model at iteration $t$.
We formulate DP trained model at $t$-th iteration to be $\mathcal{F}(\nnet^{(t)}, \mathsf{Data})$ given private $\mathsf{Data}$. 
We utilize the $\svec^{(t-1)}$ at $t-1$-th iteration to adjust the next-step noise allocation at $t$-th iteration, where $\svec^{(t-1)}$ is computed by the prior $\nnet^{(t-1)}$ at $(t-1)$-th iteration  involving features learned in the first $t-1$ iterations.
Concretely, Definition~\ref{def::arm} introduces a general  notion of heterogeneous DP learning ($\dphero$) that varies with the time $t$, essentially adjusting the noise vector (sampled from Gaussian distribution) operated over the learning model.

\begin{definition}[Heterogeneous DP Learning]
	\label{def::arm}
	Let any two neighboring datasets $\mathsf{Data}$ and $\mathsf{Data}'$  differ in a single entry, $\epsilon$ be privacy budget, and $\delta$ be failure probability.
	Let $\noi$ be Gaussian noise distribution and $\mathsf{Data}$ be inputting private data.
	A time-dependent random mechanism of learning algorithm(s) $\mathcal{F}$  at the time $t$ is,
	 	\begin{equation}
  \mecha^{(t)}=\mathcal{F}(\nnet^{(t)}, \mathsf{Data}) +\svec^{(t-1)} \cdot \noi({\mu},{\sigma}^2)
		 	\end{equation}
\end{definition}
$\noi(\mu,\sigma^2)$ represents noise distribution with parameters $\mu, \sigma$.
To generate pre-existing knowledge stored in the model parameters, we can employ a knowledge-extraction method (\eg, principal component analysis~\cite{jolliffe2016principal}) to extract pre-existing knowledge stored in the learned model, saying
        $\svec^{(t-1)}\propto \mathcal{F}_{\mathsf{know}}(\nnet^{(t-1)}), t\in[0,T]$.
Accordingly, the noise sampled from the Gaussian distribution is scaled by $\svec$ (\ie, values and noise direction).
The $\svec$ keeps varied for tracking DP model training, calibrating noise vector via pre-existing knowledge stored in the model.
In summary, the $\dphero$ expects to: 
1)  be tailored with heterogeneous DP noise that is added to the learning process;
2) be generic and irrelevant to the convergence route for distinctive models for iteratively reaching a model optimum;
3) have good model accuracy and convergence performance given a preset privacy budget.

Intuitively, iteration-wise guidance enables utility-optimized training in every backpropagation.
Dynamic privacy-budget allocation assumes a constant budget in the whole training process, whereas  $\dphero$ assumes a pre-allocated budget in each iteration for acquiring relatively fine-wise optimization.
We consider $(t,\epsilon_t)$-utility-optimized DP in Definition~\ref{def::steputility} to capture the desirable property in DP learning.

\begin{definition}[$(t,\epsilon_t)$-Utility-Optimized DP]
	\label{def::steputility}
	Let any two neighboring datasets $D$ and $D'$  differ in a single entry, $\epsilon$ be privacy budget, and $\delta$ be failure probability.
   A mechanism $\dphero$ satisfies the following conditions at any training-iteration $t$:
\begin{itemize}
	\setlength{\itemsep}{0pt}
	\item[i] $($Privacy$)$. If for any two neighboring datasets ${D}$ and $ {D}'$, $\mathrm{Pr}(\mecha^{(t)}( {D})\in S)\leq e^{\epsilon_t} \cdot\mathrm{Pr}(\mecha^{(t)}( {D}')\in S)+\delta_t$ for any iteration $t\in [0,T]$.
	\item[ii] $($Utility$)$. Supposing an optimal $Z^\ast$, the objective function satisfies $\arg\  \min_{\epsilon_t= \epsilon/T} \mathcal{F}_{\mathsf{Diff}}[\mecha^{(t)}\|Z^\ast  ]$.
   \item[iii] $(t$-Sequential Composition$)$. If $\tilde{\mecha}=(\mecha^{(0)}, \ldots,\mecha^{(t)},\allowbreak\ldots,\mecha^{(T)})$, $\tilde{\mecha}$ satisfies $(\tilde{\epsilon}, \delta)$-DP such that $\tilde{\epsilon} \leq \epsilon$.
\end{itemize}
\end{definition}

Property (i) essentially guarantees differential privacy~\cite{tcc/DworkMNS06,fttcs/DworkR14} at each training iteration.
Property (ii) extracts the iteration-wise optimization, which expects that the difference measurement $\mathcal{F}_{\mathsf{Diff}}$ between the noisy model and pure model are small as possible.
In other words, at each training iteration, the algorithm ensure $(\tilde{\epsilon}, \delta)$-DP, while simultaneously seeking to minimize the divergence from an ideal output under a constrained privacy budget
Given a fixed privacy budget $\epsilon/T$, improving utility expects to reduce the difference between $\mecha^{(t)}$ and non-noisy $Z^\ast$.
Property (iii) makes sure that no additional privacy leakages are incurred in $\dphero$ under privacy composition, which is the same as the standard DP guarantee. 
Overall, Definition~\ref{def::steputility} formalizes a utility-optimized perspective on differential privacy by requiring mechanisms to preserve per-iteration privacy guarantees while explicitly minimizing a utility loss objective relative to an optimal target. 

\subsection{Overview of DP Heterogeneous SGD}

Before constructing DP heterogeneous SGD ($\dphero$ SGD), we adopt the notations of DP-SGD by revisiting standard DP-SGD~\cite{ccs/AbadiCGMMT016}.
DP-SGD trains a model with parameters $\mod$ by minimizing the empirical loss function $\mathcal{L}(\mod)$.
For a random example $x_i$, DP-SGD computes gradients $\mathbf{g}(x_i)\leftarrow \nabla(\mod, x_i)$ with clipping value $C$, and adds noise to $\sum_i\mathbf{g}(x_i)$ sampled from Gaussian distribution $\mathcal{N}(0, \sigma^2C^2\mathbf{I})$.
An adversary cannot view the training process except for the DP-trained model.

		\begin{algorithm}[H]
			\caption{$\dphero$ SGD}
			\label{alg::srnsgd}
			\begin{algorithmic}[1]
				\FOR{each training-iteration $t$}
				\STATE Sample a batch of data $\bf x$. 
				\FOR{$x_i\in {\bf x}$}
                    \STATE $\mathbf{g}_t\leftarrow\nabla_{\mod^{(t-1)}}\mathcal{L}(\mod^{(t-1)},x_i)$. \ENDFOR
				\STATE	 $\bar{\mathbf{g}}_t=\mathbf{g}_t/ \max(1, {\|\mathbf{g}_t\|}/{\mathsf{Cp}})$.
				\STATE $\svec^{(t-1)}$ computed by $\mod^{(t-1)}$ (such as Algorithm~\ref{alg::srn})
				\STATE $\tilde{\mathbf{g}_t} \leftarrow(\sum \bar{\mathbf{g}}_t +\svec^{(t-1)}\cdot\noi(\mu, \sigma^2\cdot\bm{I}))/S_{\bf x}$.
				\STATE $\mod^{(t)}\leftarrow \mod^{(t-1)}-\eta\tilde{\mathbf{g}_t}$.
          \ENDFOR
			\end{algorithmic}
		\end{algorithm}

Motivated by DP-SGD, we explore an instantiation of $\dphero$ to generate heterogeneous noise and then add a ``wisdom'' (guided by prior learned knowledge) heterogeneous noise.
Accordingly, we instantiate DP-SGD~\cite{ccs/AbadiCGMMT016} as the basis and replace its i.i.d. noise with heterogeneous noise.
In DP-SGD, the standard deviation $\sigma$ of $\noi(0,\sigma^2)$ is constant for each layer; however, our mechanism guided by $\svec$ adds different noise vectors for model updates at each iteration. 
With $\svec^{(t-1)}$, the added noise to each layer is guided by the learned model in the aspects of scales and noise space at every iteration.

Using $\dphero$ SGD, we implement an instantiated scheme of training a model starting from random initialization.
The first step is generating heterogeneous noise building on the covariance matrix of the model.
By principal component analysis (PCA)~\cite{jolliffe2016principal}, the noise matrix is tuned via the covariance matrix, which aligns with the subspace in which features exist.
PCA provides a natural mechanism to uncover and exploit heterogeneity in the underlying data distribution. Specifically, PCA rotates the model parameters into a new orthogonal basis where each principal component corresponds to an axis of maximal variance, and the associated eigenvalues quantify the variance along each direction. This decomposition reveals strong anisotropy: some directions exhibit substantially higher variability than others. In the context of DP learning, such heterogeneity is critical, as the sensitivity of the data to perturbations varies across directions. Allocating homogeneous noise across all dimensions fails to respect this structure, disproportionately affecting components with low intrinsic variance. By leveraging PCA, noise can be modulated according to the variance structure, \ie,    injecting smaller noise in high-variance directions that are more robust to perturbations, and simultaneously allocating more regularization or larger noise to fragile, low-variance directions. 

Formally, by transforming the model parameters into the PCA basis, applying direction-dependent noise proportional to the inverse eigenvalues of the covariance matrix, and transforming back, one achieves a heterogeneous noise model that preserves critical information while satisfying DP constraints. Thus, PCA serves as a powerful tool for constructing heterogeneity-aware noise mechanisms, enabling significantly improved privacy-utility trade-offs compared to uniform noise baselines.
When training with SGD, updatable gradients computed in the backpropagation are added by noise, whose scales are guided by the subspace generated by PCA.
We consider extracting pre-existing knowledge from whole model parameters rather than a layer to capture the whole statistical space.
In this way, the noise space is more comprehensive, and the noise scale is more adaptive to the feature space.

\subsection{Detailed Construction}
\subsubsection{Construction of  $\dphero$ SGD} 
Regarding Definition~\ref{def::steputility}, achieving strong utility under tight privacy constraints demands carefully shaped noise: uniform noise across dimensions can disproportionately damage sensitive components, leading to unnecessary degradation in optimization performance. This motivates the use of PCA, which reveals intrinsic heterogeneity in the update space by decomposing it into orthogonal directions with varying variance. 
PCA identifies principal directions where the model is naturally more robust to perturbations, enabling noise allocation that is inversely aligned with direction-specific sensitivity. 
By injecting smaller noise along high-variance directions and larger noise along low-variance ones, a PCA-based noise mechanism better preserves informative structures in the model updates while satisfying the same overall DP guarantees. 
Thus, PCA provides a principled and structure-aware strategy to optimize the critical trade-off between privacy preservation and utility maximization articulated in Definition~\ref{def::steputility}.

\noindent\textbf{Step-1}. 
Assume that the model $\mod^{(0)}$ is initialized to be random during training.
The model parameters at each iteration $t$ represent the learning process of features in the dataset; \ie, the training is to optimize the model parameters by capturing data attributes.
The $\mod^{(t-1)}$ takes a set of inputting data ${\bf x}$ in size $S_{\bf x}$ (\ie, batch size) and compute the gradient
\begin{equation}
	\mathbf{g}_t\leftarrow\nabla_{\mod^{(t-1)}}\mathcal{L}(\mod^{(t-1)},x_i), x_i\in {\bf x}
\end{equation}
The $\mathbf{g}_t$ is clipped with the clip value $\mathsf{Cp}$, thus ensuring that the gradients are scaled to be of norm $\mathsf{Cp}$.
The clipped gradients are
	$ \bar{\mathbf{g}}_t$ handled with clip value $\mathsf{Cp}$.

\noindent\textbf{Step-2}. In our implementation,  $\svec^{(t-1)}$ can be realized by following Algorithm~\ref{alg::srn} using $\mod^{(t-1)}$.
Since $\svec^{(t-1)}$ is varied at each training iteration, $\svec^{(t-1)}$-guided noise distribution operating on gradients is varied during the whole training process.
$\svec^{(t-1)}$ contains the computed sub-space $\subsp^{(t-1)}$ and eigenvalues matrix $\vsvd^{(t-1)}$ extracted from prior-learned model.
From a practical view, $\subsp^{(t-1)}$ configures the direction of the noise to be added.
$\vsvd^{(t-1)}$ generated from singular value decomposition is utilized to scale the noise distribution.
Here, independent and identically distributed noise can be sampled from a standard noise distribution $\noi$, such as Gaussian and Laplace distributions. 
The generation of $\svec^{(t-1)}$ does not introduce extra leakage since $\mod^{(t-1)}$ learned in the prior $t-1$ iterations has been well-protected through $\dphero$ SGD.
 
\noindent\textbf{Step-3}. 
Following the logic of DP-SGD, $\svec^{(t-1)}$-guided noise is added to a batch of gradients,
\begin{equation}
	\begin{aligned}
		\tilde{\mathbf{g}_t} \leftarrow(\sum \bar{\mathbf{g}}_t +\svec^{(t-1)}\cdot\noi(\mu, \sigma^2\cdot\bm{I}))/S_{\bf x}
	\end{aligned}
\end{equation} 
$\vsvd^{(t-1)}$ here is different at every backpropagation of different layers, achieving different noise levels on each layer.
This layer-wise noise tuning speeds up the convergence and mitigates model collapse.
It derives from the corresponding model parameters of a unique layer that is relevant to an iteration $t$ at the current backpropagation.
$\dphero$ SGD is independent of the choices of optimizer and optimizers, which could be potentially generalized to different learning models without  much effort of manual tuning.

\noindent\textbf{Step-4}. The last step is to perform gradient decent $ \mod^{(t)}\leftarrow \mod^{(t-1)}-\eta\tilde{\mathbf{g}_t}$ using the new noisy gradients $\tilde{\mathbf{g}_t}$, where $\eta_t$ is a preset scalar.
For attaining higher utility, adding noise should avoid hurting important features (extracted by the model for later prediction.
Finally, the model converges better since the space of model parameters (regarded as a matrix) is relatively less destroyed by using the noise sampled from the identical space.

\subsubsection{Construction of Noise Guidance}
\label{subsec::workflow}

The math tool, principal component analysis (PCA)~\cite{nips/Shawe-TaylorW02} performs analyzing data represented by inter-correlated quantitative dependent variables.
 It forms a set of new orthogonal variables, called 
components, depending on the matrix eigen-decomposition and singular value decomposition (SVD).
Given a matrix $\mathbf{X}$, of column-wise mean equal to $0$, the multiplication $\mathbf{X}^\top \mathbf{X}$ is a correlation matrix.
Later, a diagonal matrix of the (non-zero) eigenvalues of $\mathbf{X}^\top \mathbf{X}$ is extracted together with the eigenvectors.
Essentially, PCA simplifies data representation and decomposes its corresponding structures.

We propose a simple yet efficient approach by examining the model parameters as a result of knowledge integration over diverse features extracted from private data. 
As in Algorithm~\ref{alg::srn}, we employ the PCA decomposition~\cite{jolliffe2016principal} to extract knowledge learned by the training model and apply generated guidance $\svec^{(t)}$ at iteration $t$ to adjust noise addition at the next iteration.
PCA decomposition can extract knowledge from representative data (\ie, model parameters in our setting) by analyzing inter-correlated quantitative dependence.
Normally, a neural network kernel extracting the features from the images is a matrix that moves over the input data to perform the dot product with the sub-region of input data.
Denote $\mathbb{R}$ to be the real number set. 
Let $\mathbf{b}=[b_1, b_2, \ldots, b_k]$ be a vector, and $\mathbf{B}=[\mathbf{b_1},\mathbf{b_2},\ldots, \mathbf{b_d}]^\top\in \mathbb{R}^{n\times m}$ be a matrix.

\begin{center}
   \begin{minipage}{0.96\linewidth}
		\begin{algorithm}[H]
			\caption{DP Heterogeneous Noise Guidance}
			\label{alg::srn}
			\begin{algorithmic}[1]
				\STATE Compute $\tilde{\mod}^{(t)} = \mod^{(t)}(\mod^{(t)})^\top$
				\STATE Compute $\vsvd^{(t)},\subsp^{(t)} \leftarrow \pca(\tilde{\mod}^{(t)})$
				\STATE Compute 	$\svec^{(t)} = \subsp^{(t)}\cdot \vsvd^{(t)}$
			\end{algorithmic}
		\end{algorithm}
	\end{minipage}
\end{center}

\noindent\textbf{Step-1}. For each layer, the client calculates $\mod^{(t)}(\mod^{(t)})^\top$ to attain $\tilde{\mod}^{(t)}\in\mathbb{R}^{k\times k}$.

\noindent\textbf{Step-2}. The client performs principle component analysis $\pca(\tilde{\mod}^{(t)})$ to give the sub-space $\subsp^{(t)}\in \mathbb{R}^{d\times k}$.
The algorithm $\pca$ reduces the dimensions and encodes $\mod^{(t)}$ into a compact representation that is good enough to analyze and represent current  $\mod^{(t)}$. Simultaneously, the client computes singular value decomposition $\dot{\vsvd^{(t)}}=\pca(\tilde{\mod}^{(t)})$ through PCA and transform $\pca(\tilde{\mod}^{(t)})$ to eigenvalues matrix $\vsvd^{(t)}\in \mathbb{R}^{k\times k}$ by $\dot{\vsvd^{(t)}}(\dot{\vsvd^{(t)}})^\top$.
The $\vsvd^{(t)}$ is employed as the scalar matrix to adjust noise scales for a batch of gradients in $t$-th training iteration. 

\noindent\textbf{Step-3}. $\svec^{(t)}$ is computed by multiplying $\vsvd^{(t)}$ and $\subsp^{(t)}$, which are further utilized to guide the noise added to gradients in every backpropagation. 

\subsubsection{Noise Guidance through Pre-existing Knowledge}
For a non-private model, $\mod$ converges to a stable status through uncountable routes of optimizing model parameters.
Noise addition becomes complicated if we refer to different optimization tools; it is no longer generic.
The addition of noise in $\mod$ inevitably has a negative contribution to the extraction of features from private data compared to pure parameters.

$\svec$ achieves improved allocation of parameter-wise heterogeneous noise at each training iteration with the constraint of a preset privacy budget.
This automatic allocation is generated from the prioritization evaluation of the training model in a differentially private manner. 
From this viewpoint,  injecting noise into the model parameters negatively contributes to both the knowledge and the process of knowledge integration. 
Compared with DP-SGD, the proposed method mitigates the destruction of the process of knowledge integration while keeping the learned knowledge unchanged. 
Different grid search for tuning hyperparameters, $\dphero$ SGD adjusts the intermediate training process via instantaneous learnable parameters rather than setting a set of possibilities. 
Combining grid search (vertically tuned) and $\dphero$ SGD (horizontally tuning) may further boost the automatic optimization of DP learning in an algorithmic view.

\subsection{Federated Training with $\dphero$}
FedFed~\cite{nips/YangZZ0PL023} is a hierarchical federated optimization protocol that divides clients into groups and employs a two-level communication structure. Clients locally compute model updates using stochastic gradient descent (SGD), and then communicate either intra-group or inter-group for aggregations, significantly reducing the global communication cost.
To extend FedFed with formal DP guarantees, we apply per-client $\dphero$ SGD in the local training phase. The modified procedure for each client in FedFed is as Algorithm~\ref{algo:FedFest_with_hero}.

The training process proceeds in rounds orchestrated by a central server. At the start, the server distributes the initial global model $\phi^0$ and a globally shared dataset $\mathcal{D}^s$ to all clients, who each combine it with their local private dataset $\mathcal{D}^k$ to form $\mathcal{D}_t^k = \mathcal{D}^k \cup \mathcal{D}^s$. In each communication round $r$, the server randomly selects a subset of clients $\mathcal{C}_r \subseteq {1, \ldots, K}$, transmits the current global model $\phi^r$, and waits while each selected client $k \in \mathcal{C}_r$ performs $E_r$ epochs of local training using the DP-Hero SGD optimizer on their combined dataset $\mathcal{D}_t^k$. This approach ensures local updates achieve differential privacy via gradient clipping and noise addition. Upon completion, clients send their updated model parameters $\phi_k^{r+1}$ to the server, which aggregates these updates (e.g., via $AGG({\phi_k^{r+1}})$) to form the next global model $\phi^{r+1}$. This iterative process allows the system to collaboratively train a privateglobal model with DP guarantee, while benefiting from the scalability and communication efficiency of the FedFed.

\begin{algorithm}[t]
	\caption{FedFed with $\dphero$ SGD}
	\label{algo:FedFest_with_hero}
	\textbf{Server Input: } initial global model $\phi^0$, communication round $ T_r  $. \\
	\textbf{Client $k$'s Input: } local epochs $ E_r$, local private datasets $\mathcal{D}^k$, learning rate $\eta_k$.
 
	\begin{algorithmic}
		\STATE {\bfseries Initialization:} server distributes the initial model $\phi^0$ to all clients, 
        \STATE  {Generate globally shared dataset $\mathcal{D}^s$.} 
        \STATE  {Distribute $\mathcal{D}^s$ to all clients and $\mathcal{D}_{t}^k = \mathcal{D}^k \cup \mathcal{D}^s$.}
        \STATE 
        \STATE \textbf{Server Executes:}
            \FOR{each round   $ r=1,2, \cdots,  T_r $}
        		\STATE server samples a subset of clients $C_r \subseteq \left \{1, ..., K \right \}$
                \STATE server \textbf{communicates} $\phi^r$ to selected clients $k \in C_r $
        		\FOR{ each client $ k \in C_{r}$ \textbf{ in parallel }}
                       \STATE $\small \phi_{k}^{r+1} \leftarrow \textbf{Client\_Training}(k, \phi^{r}$)
        		\ENDFOR
             \STATE $ \phi^{r+1} \leftarrow AGG(\phi_{k}^{r+1}) $
        	\ENDFOR
       \STATE
        \STATE \textbf{Client\_Training($ k, \phi^r$):}
        \STATE $\phi^r$ initialize local model $\phi_k^r$
            \FOR{each local epoch $e$ with $e=1,2,\cdots, E_r $}
             \STATE$\small \phi_{k}^{r+1} \leftarrow$ $\dphero$ SGD update with $\mathcal{D}_t^k$  
             \ENDFOR
        \STATE \textbf{Return} $\phi_{k}^{r+1} $ to server
\end{algorithmic}
\end{algorithm}


\subsection{Privacy Analysis and Theoretical Explanation}
\label{sec::privacy}

We establish the fundamental privacy guarantee for DP-Hero mechanisms. It shows that if the noise scale ($\sigma$) is chosen according to the lower bound established by ~\cite{ccs/AbadiCGMMT016}, then the $\dphero$ mechanism at each iteration attains ($\epsilon,\delta$)-DP. Theorem~\ref{dp_srnsgd} ensures that the per-iteration privacy risk of DP-Hero is comparable to that of classical DP mechanisms, provided the noise is correctly calibrated.

\begin{theorem}
	\label{dp_srnsgd}
	Let a random mechanism $\mecha^{(t)}$ be $(\epsilon^{(t)}, \delta)$-differential privacy at the iteration $t$.
	A $\dphero$ mechanism $\Tilde{\mecha}^{(t)}$ parameterized by $(\Tilde{\epsilon}^{(t)},\delta)$ is $({\epsilon}, \delta)$-differential privacy if $\Tilde{\sigma} =\sigma$, where  $\sigma \geq c_2q\sqrt{T\log(1/\delta)}/{\epsilon}$~\cite{ccs/AbadiCGMMT016}.
\end{theorem}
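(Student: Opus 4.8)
The plan is to reduce the claim to the moments-accountant analysis of standard DP-SGD~\cite{ccs/AbadiCGMMT016}, showing that the only structural change made by $\dphero$ — replacing isotropic noise $\noi(0,\sigma^2\bm I)$ with the heterogeneous noise $\svec^{(t-1)}\cdot\noi(\mu,\sigma^2\bm I)$ — neither leaks extra information nor weakens the per-step Gaussian guarantee. \emph{Step 1 (the guidance is post-processing of the released past).} Following Definition~\ref{def::priloss}, the auxiliary input $\aux$ of $\Tilde{\mecha}^{(t)}$ is the tuple of all previously released iterates, and $\mod^{(t-1)}$ is a deterministic function of $\aux$; by Algorithm~\ref{alg::srn} so are $\subsp^{(t-1)}$, $\vsvd^{(t-1)}$, and hence $\svec^{(t-1)}=\subsp^{(t-1)}\vsvd^{(t-1)}$. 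Conditioned on $\aux$, therefore, $\svec^{(t-1)}$ is a fixed linear map that does not depend on the current private batch beyond what has already been DP-protected, so it carries no additional privacy cost. The deterministic shift $\svec^{(t-1)}\mu$ induced by a nonzero mean is likewise post-processing, so we may take $\mu=0$ WLOG.

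\emph{Step 2 (per-step RDP is no worse than the isotropic Gaussian mechanism).} Fix neighboring datasets $D,D'$ and condition on $\aux$. Clipping to norm $\mathsf{Cp}$ bounds the sensitivity of $\sum_i\bar{\mathbf g}_t(x_i)$ by $\mathsf{Cp}$, so (dropping the $1/S_{\bf x}$ rescaling, which is post-processing) the two conditional outputs are $\noi(a,\sigma^2\,\svec\svec^\top)$ and $\noi(a+v,\sigma^2\,\svec\svec^\top)$ with $\|v\|\le\mathsf{Cp}$. Writing $\svec=\subsp\vsvd$ with $\subsp$ having orthonormal columns and $\vsvd\succeq\bm I$ under the scaling/projection convention made explicit in Step~4, the Rényi divergence of order $\alpha$ between these Gaussians is $\frac{\alpha}{2\sigma^2}\,v^\top(\svec\svec^\top)^{+}v\le\frac{\alpha}{2\sigma^2}\,v^\top v\le\frac{\alpha\,\mathsf{Cp}^2}{2\sigma^2}$, i.e. exactly the RDP bound of the isotropic Gaussian mechanism with noise multiplier $\sigma$ and sensitivity $\mathsf{Cp}$. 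Composing this with privacy amplification by the batch-sampling rate $q$ shows $\Tilde{\mecha}^{(t)}$ enjoys the same subsampled-Gaussian RDP guarantee as the corresponding step of DP-SGD.

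\emph{Step 3 (composition and conversion).} Composing the $T$ subsampled steps via RDP composition~\cite{csfw/Mironov17,tcc/BunS16} and converting to $(\epsilon,\delta)$-DP via Theorem~\ref{th:rdp-dp} reproduces the moments-accountant statement of~\cite{ccs/AbadiCGMMT016}: there are absolute constants $c_1,c_2$ such that whenever $\epsilon<c_1q^2T$ and the noise scale satisfies $\Tilde{\sigma}=\sigma\ge c_2 q\sqrt{T\log(1/\delta)}/\epsilon$, the composed mechanism $\Tilde{\mecha}=(\Tilde{\mecha}^{(1)},\dots,\Tilde{\mecha}^{(T)})$ is $(\epsilon,\delta)$-DP; each $\Tilde{\mecha}^{(t)}$ then inherits the per-iteration guarantee, and property~(iii) of Definition~\ref{def::steputility} holds with $\Tilde{\epsilon}\le\epsilon$. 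Since by Step~2 the per-step divergence of $\dphero$ never exceeds that of the isotropic mechanism, the identical numerical bound applies, which is the theorem.

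\emph{Main obstacle.} The delicate point is Step~2: the anisotropic covariance $\sigma^2\svec\svec^\top$ dominates $\sigma^2\bm I$ only on the subspace it spans and only when its spectrum is bounded below by $1$. If $\subsp^{(t-1)}$ is rank-deficient and the sensitivity vector $v$ has a component orthogonal to $\mathrm{col}(\subsp^{(t-1)})$, that component receives no noise and privacy fails outright; likewise, small entries of $\vsvd^{(t-1)}$ would shrink the noise below $\sigma$ and break the bound. The proof therefore hinges on pinning down the normalisation convention — rescaling $\vsvd^{(t-1)}$ so its smallest nonzero eigenvalue is $\ge 1$ and confining the clipped update to $\mathrm{col}(\subsp^{(t-1)})$ — after which the divergence estimate is the routine Gaussian-RDP calculation and the remaining argument is bookkeeping over the composition constants.
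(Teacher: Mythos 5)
Your proposal takes a genuinely different, and substantially more careful, route than the paper. The paper's proof of Theorem~\ref{dp_srnsgd} is a direct appeal to the DP-SGD bound of \cite{ccs/AbadiCGMMT016}: it observes that $\dphero$ has ``the same configuration of $q$, $T$, $c$'' and the same $\sigma$, and concludes the same $(\epsilon,\delta)$ guarantee, without ever engaging with the fact that the injected noise $\svec^{(t-1)}\cdot\noi(\mu,\sigma^2\bm I)$ is anisotropic. Your Step~1 (that $\svec^{(t-1)}$ is a deterministic function of the already-released iterate $\mod^{(t-1)}$ and hence privacy-free) and Step~2 (the explicit R\'enyi divergence $\frac{\alpha}{2\sigma^2}\,v^\top(\svec\svec^\top)^{+}v$ between the two shifted anisotropic Gaussians) supply exactly the content the paper leaves implicit; what your route buys is that it makes visible the precise condition under which the reduction to the isotropic moments-accountant bound is legitimate, whereas the paper's route simply asserts the conclusion.

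The obstacle you isolate is real and is not resolved anywhere in the paper. Your bound requires $\svec\svec^\top\succeq\bm I$ on a subspace containing the sensitivity vector $v$; the paper's own calibration (Theorem~\ref{the:dpsgd-para}) only enforces $\sum_{i=1}^{k}v_i^2=k\sigma^2$, i.e.\ it matches the \emph{total} noise variance to that of DP-SGD. This does not give $v_i\geq\sigma$ for every $i$, so directions whose eigenvalue $\egv_i$ lies below the root-mean-square receive strictly less noise than $\sigma$, and the per-step guarantee as stated does not follow without the additional normalization you describe (smallest relevant eigenvalue of $\vsvd^{(t-1)}$ at least $1$, clipped update confined to $\mathrm{col}(\subsp^{(t-1)})$). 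In short, your conditional proof is the honest version of the theorem; the paper's one-line proof skips the step where the condition would be needed. One small internal slip: you cite a ``convention made explicit in Step~4,'' but your proposal has no Step~4 --- that convention only appears in your closing paragraph.
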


\begin{proof}
    Standard DP-SGD is $(\epsilon,\delta)$-differentially private  if $\sigma \geq c_2q\sqrt{T\log(1/\delta)}/{\epsilon}$ for any $\delta>0$~\cite{ccs/AbadiCGMMT016}.
The $q, T$ are, respectively, sampling probability and the number of steps relevant to model training.
The $c$ is a  constant for all DP mechanisms.
Take $\mecha^{({t})}$ to be a $\dphero$ random mechanism that is derived from  $(\epsilon, \delta)$-differential privacy.
The $\mecha_{\mathsf{t}}$ has the same configuration of $q, T,c$ due to the identical training procedure.
If $\sigma$ is unchanged, $\mecha^{({t})}$ also satisfies $\sigma \geq c_2{q\sqrt{T\log(1/\delta)}}/{\epsilon}$ for any $\delta>0$.
Thus, $\mecha_{\mathsf{t}}$  is $(\epsilon, \delta)$-differentially private.
\end{proof}

Then, we demonstrate  that by appropriately parameterizing the diagonal scaling matrix ($\svec$), the total variance of noise injected by $\dphero$ SGD can be made equal to that of standard DP-SGD. This means that the two mechanisms have matching privacy and utility properties from the perspective of total noise magnitude as in Theorem~\ref{the:dpsgd-para}.
\begin{theorem}
	\label{the:dpsgd-para}
	Let $\dphero$ SGD be parameterized by $\noi(0,\tilde{\sigma}^2)$ and standard DP-SGD be parameterized by $\noi(0,\sigma^2)$, repectively.
    Consider $\dphero$ SGD adds noise from  $\mathcal{N}(0,\Tilde{\sigma}^2)$ and standard DP-SGD uses $\mathcal{N}(0, \sigma^2)$.
    Let $\svec$ be a diagonal matrix whose $i$-th diagonal entry is defined as	$v_i={\egv_i\cdot\sqrt{k}\cdot\sigma}/{\sqrt{\sum_{i=1}^{k} {\egv_i}^2}}$, where $k$ denotes the dimension. Then, if $\tilde{\sigma} = \sigma$, the total variance of the noise injected by DP-Hero SGD with $\svec$ equals that of standard DP-SGD.
\end{theorem}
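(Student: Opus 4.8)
The plan is to reduce Theorem~\ref{the:dpsgd-para} to a one-line algebraic identity about the normalization hidden inside the entries $v_i$, after first pinning down what is being compared. I read ``total variance of the noise injected'' as the trace of the covariance matrix of the noise vector added to the aggregated gradient, that is, the sum of the per-coordinate variances over all $k$ coordinates; this is the natural scalar measure of overall perturbation and the one implicitly referenced in the surrounding discussion and in Theorem~\ref{dp_srnsgd}.

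First I would record the easy side: standard DP-SGD adds $n\sim\noi(0,\sigma^2 I_k)$, whose covariance is $\sigma^2 I_k$, so its total variance is $\mathrm{tr}(\sigma^2 I_k)=k\sigma^2$. For $\dphero$ SGD the injected noise is $\svec\,\tilde n$ with $\svec=\mathrm{diag}(v_1,\dots,v_k)$ and $\tilde n\sim\noi(0,\tilde\sigma^2 I_k)$; invoking the standard fact that an affine image of a Gaussian is Gaussian, $\svec\tilde n\sim\noi(0,\tilde\sigma^2\,\svec\svec^\top)$, and since $\svec$ is diagonal, $\svec\svec^\top=\mathrm{diag}(v_1^2,\dots,v_k^2)$, so the total variance of the $\dphero$ noise equals $\tilde\sigma^2\sum_{i=1}^k v_i^2$. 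The heart of the argument is then the identity $\sum_{i=1}^k v_i^2=k\sigma^2$: substituting $v_i=\egv_i\sqrt{k}\,\sigma/\sqrt{\sum_{j=1}^k\egv_j^2}$ gives $v_i^2=\egv_i^2\,k\,\sigma^2/\sum_{j=1}^k\egv_j^2$, and summing over $i$ the numerator factor $\sum_{i=1}^k\egv_i^2$ cancels the denominator $\sum_{j=1}^k\egv_j^2$, leaving exactly $k\sigma^2$. Thus the normalization $\sqrt{k}/\sqrt{\sum_j\egv_j^2}$ is precisely what makes the diagonal rescaling $\svec$ variance-neutral in aggregate: it redistributes noise across the principal directions according to the eigenvalue profile $\{\egv_i\}$ without changing the total. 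Combining the two computations and invoking $\tilde\sigma=\sigma$ gives total variance $k\sigma^2$ for both mechanisms, which is the claim; one may then append the one-line remark that, since the aggregate noise now matches and $\sigma$ still meets the Abadi noise-scale bound (Theorem~\ref{dp_srnsgd}), the privacy accounting coincides as well.

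The computation itself is routine, so the points that actually need care are conceptual. First, one must state explicitly that ``total variance'' is the trace of the covariance and that both mechanisms are compared in the same ambient dimension $k$, so that the $k$ in the formula for $v_i$ is genuinely the coordinate count and not a truncated number of retained principal components. Second, the base noise scale must not be double-counted: whether one regards the factor $\sigma$ as sitting inside $\svec$ or inside the base distribution $\noi(0,\tilde\sigma^2 I_k)$, it is the hypothesis $\tilde\sigma=\sigma$ that reconciles the two readings and makes the two totals land on the same value. Finally, $\svec$ is only well defined when $\sum_{j=1}^k\egv_j^2>0$, that is, when the prior parameter Gram matrix $\mod^{(t-1)}(\mod^{(t-1)})^\top$ is not identically zero; this mild nondegeneracy holds throughout training past initialization and is the only assumption beyond what is already stated.
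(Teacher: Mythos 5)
Your proposal follows essentially the same route as the paper: both reduce the claim to matching the expected squared norm of the injected noise, i.e.\ to the identity $\sum_{i=1}^{k} v_i^2 = k\sigma^2$ obtained by cancelling $\sum_i \egv_i^2$ in the normalization of $v_i$. One caution: as literally written, your DP-Hero total is $\tilde{\sigma}^2\sum_i v_i^2 = k\sigma^4$ under $\tilde{\sigma}=\sigma$, which only equals $k\sigma^2$ if the factor $\sigma$ lives in exactly one place (either inside $\svec$ with a unit-variance base Gaussian, or in the base Gaussian with $\sum_i v_i^2=k$); you correctly flag this double-counting risk, and the paper's own proof carries the same ambiguity, so resolving it by fixing one convention would tighten both arguments.
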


\begin{proof}
For generating noise, we need to keep $\tilde{\sigma}^2=\sigma^2$ to guarantee the same size of noise sampled from the distributions $\ournoi,\noi$.
Let $n$ sampled from Gaussian distribution be $\nos \leftarrow\noi(\mu,\sigma^2)$.
For sampling $k$ times (until iteration $k$) from Gaussian distribution, we have the expectation of $\noimatri$,
\begin{equation}
\mathbb{E}[\noimatri^\top\cdot \noimatri]= \mathbb{E}[\sum_{i=1}^{k} (n_i)^2]=t\sigma^2
\end{equation}
For sampling $k$ times from $\ournoi$, we require the following expectation to satisfy $\mathbb{E}[ (\subsp \vsvd\noimatri)^\top\cdot \subsp\vsvd\noimatri]=t\sigma^2$.
This equation gives the relation $\sum_{i=1}^{k} v_i^2=k\sigma^2$.
That is, a feasible solution of $v_i$ is set to be
$v_i={\egv_i\cdot\sqrt{k}\cdot\sigma}/{\sqrt{\sum_{i=1}^{k} {\egv_i}^2}}$.
\end{proof} 

Building on $\alpha$-R\'enyi divergence and privacy loss, concentrated differential privacy (CDP)~\cite{tcc/BunS16} allows improved computation mitigating single-query loss and high probability bounds for accurately analyzing the cumulative loss.
It centralizes privacy loss around zero, maintaining sub-Gaussian characteristics that make larger deviations from zero increasingly improbable.
In return, zero-CDP implies $(\epsilon_{\rho,\delta},\delta)$-DP as restated in Theorem~\ref{the:cdp-dp}~\cite{tcc/BunS16}.

\begin{definition}[zero-CDP~\cite{tcc/BunS16}]
	\label{def::RDP}
	A randomized mechanism $\mecha$ is said to be $\rho$ zero-concentrated differentially private if for any neighboring datasets $D$ and $D'$, and all $\alpha\in(1,\infty)$, we have,
		\begin{equation}
				\begin{aligned}
			\mathcal{D}_\alpha(\mecha(D)|\mecha(D'))=\dfrac{1}{\alpha-1}\log \mathbb{E}[e^{(\alpha-1)\mathcal{L}_{\mathsf{Pri}}^{(o)}}]\leq \rho \alpha
					\end{aligned}
			\end{equation}
	where $\mathcal{L}_{\mathsf{Pri}}^{(o)}$ is privacy loss and $\mathcal{D}_\alpha(\mecha(D)|\mecha(D'))$ is $\alpha$-R\'enyi divergence between the distributions of $\mecha(D)$ and $\mecha(D')$.
\end{definition}

\begin{theorem}[From zero-CDP to $(\epsilon,\delta)$-DP~\cite{tcc/BunS16}]
\label{the:cdp-dp}
If a random mechanism $\mecha$ is $\rho$-zero-CDP, then $\mecha$ also provides $(\rho+2\sqrt{\rho \log (1/\delta)},\delta)$-DP for any $\delta>0$.
\end{theorem}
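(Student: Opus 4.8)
The plan is to establish Theorem~\ref{the:cdp-dp} by reproducing the classical zCDP-to-DP conversion of Bun and Steinke~\cite{tcc/BunS16}, which I would carry out in two stages: first upgrade the R\'enyi-divergence bound of Definition~\ref{def::RDP} into a sub-Gaussian upper-tail bound on the privacy-loss random variable, and then invoke the generic fact that such a tail bound yields approximate DP in the sense of Definition~\ref{def::dp}.

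\textbf{Stage 1 (tail bound on the privacy loss).} I would fix an ordered pair of neighbors $D,D'$ and let $Z=\mathcal{L}_{\mathsf{Pri}}^{(o)}$ with $o$ drawn from $\mecha(D)$, as in Definition~\ref{def::priloss}. Writing $\lambda=\alpha-1>0$, the hypothesis $\mathcal{D}_\alpha(\mecha(D)\|\mecha(D'))\le\rho\alpha$ becomes, after clearing the $\frac{1}{\alpha-1}\log(\cdot)$, the moment-generating bound $\mathbb{E}[e^{\lambda Z}]\le e^{\lambda\rho\alpha}=e^{\lambda^2\rho+\lambda\rho}$, valid for every $\lambda>0$. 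A Chernoff bound then gives, for any $\epsilon>\rho$, that $\mathrm{Pr}[Z\ge\epsilon]\le e^{-\lambda\epsilon}\,\mathbb{E}[e^{\lambda Z}]\le e^{\lambda^2\rho-\lambda(\epsilon-\rho)}$; minimizing the exponent over $\lambda>0$, which is attained at $\lambda^\ast=(\epsilon-\rho)/(2\rho)$, yields $\mathrm{Pr}[Z\ge\epsilon]\le e^{-(\epsilon-\rho)^2/(4\rho)}$. Choosing $\epsilon=\rho+2\sqrt{\rho\log(1/\delta)}$ drives this right-hand side down to exactly $\delta$; note that such an $\epsilon$ satisfies $\epsilon>\rho$, so $\lambda^\ast>0$ and the order $\alpha=1+\lambda^\ast$ lies in $(1,\infty)$, which is precisely where Definition~\ref{def::RDP} is assumed to hold.

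\textbf{Stage 2 (from the tail bound to approximate DP).} I would then apply the standard lemma that a privacy-loss tail bound upgrades to approximate DP. For any measurable $S\subseteq\mathrm{Range}(\mecha)$, split $S=S_1\cup S_2$ with $S_1=\{o\in S:\mathcal{L}_{\mathsf{Pri}}^{(o)}\le\epsilon\}$ and $S_2=S\setminus S_1$; on $S_1$ the pointwise inequality $\mathrm{Pr}[\mecha(D)=o]\le e^{\epsilon}\,\mathrm{Pr}[\mecha(D')=o]$ integrates to $\mathrm{Pr}[\mecha(D)\in S_1]\le e^{\epsilon}\,\mathrm{Pr}[\mecha(D')\in S]$, while $\mathrm{Pr}[\mecha(D)\in S_2]\le\mathrm{Pr}[Z\ge\epsilon]\le\delta$ by Stage 1. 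Adding the two estimates gives $\mathrm{Pr}[\mecha(D)\in S]\le e^{\epsilon}\,\mathrm{Pr}[\mecha(D')\in S]+\delta$ with $\epsilon=\rho+2\sqrt{\rho\log(1/\delta)}$, which is exactly the claimed $(\epsilon,\delta)$-DP; since $D,D'$ was an arbitrary ordered pair of neighbors and zCDP holds for all such pairs, the guarantee is uniform.

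I do not expect a genuine obstacle, as the statement is a known result; the step deserving the most care is the Chernoff optimization in Stage 1, where one must verify that the minimizer $\lambda^\ast$ is strictly positive (equivalently $\epsilon>\rho$) so that it corresponds to an admissible R\'enyi order $\alpha>1$ in Definition~\ref{def::RDP}, and separately dispose of the degenerate case $\rho=0$ (where $\mecha$ is already $0$-DP and the claim is trivial). A minor additional subtlety is that the privacy loss in Definition~\ref{def::priloss} is asymmetric in $(D,D')$, so Stage 2 must be run with the ordering fixed as written and the bound then read off for that ordered pair, which is exactly what Definition~\ref{def::dp} demands of every ordered pair of neighbors.
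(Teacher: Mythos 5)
Your proposal is correct and is essentially the same argument the paper relies on: the paper states Theorem~\ref{the:cdp-dp} as a cited result from Bun--Steinke without a standalone proof, but the identical machinery---turning the R\'enyi bound into the MGF bound $\mathbb{E}[e^{(\alpha-1)\mathcal{L}_{\mathsf{Pri}}}]\le e^{(\alpha-1)\alpha\rho}$, applying Markov/Chernoff with the optimizer $\alpha=\frac{\epsilon+\rho}{2\rho}$ (your $\lambda^\ast=\frac{\epsilon-\rho}{2\rho}$) to get $\mathrm{Pr}[\mathcal{L}_{\mathsf{Pri}}\ge\epsilon]\le e^{-(\epsilon-\rho)^2/(4\rho)}\le\delta$, and then the tail-splitting step over $S$---appears verbatim inside the paper's proof of Theorem~\ref{the::compose-dpsgd}. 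Your added care about $\lambda^\ast>0$ and the degenerate case $\rho=0$ is sound and slightly more complete than the paper's treatment.
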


At last, since we have aligned the privacy guarantee of $\dphero$ with the standard DP, we follow the standard composition-paradigm proof~\cite{csfw/Mironov17}  under the definition of zCDP~\cite{corr/DworkR16,tcc/BunS16,sp/Yu0PGT19} through R\'enyi Divergence by Bun~\etal~\cite{tcc/BunS16} for a tight analysis, resulting in Theorem~\ref{the::compose-dpsgd}.

 \begin{theorem}[Composition of $\dphero$ SGD] 
 \label{the::compose-dpsgd}
 Let a mechanism consist of $T$ $\dphero$ mechanisms: $\mecha=(\mecha^{(1)},\allowbreak\dots,\mecha^{(T)})$.
	Each $\dphero$ SGD $\mecha^{(t)}: \mathbb{D}^{(t)}\rightarrow\mathbb{R}$ satisfies $\rho^{(t)}$-zCDP, where the $\mathbb{D}^{(t)}$ is a subset of $\mathbb{D}$.
	The mechanism $\mecha$ satisfies $((\max_{t} \rho^{(t)})\allowbreak+2\sqrt{(\max_{t} \rho^{(t)})\log(1/\delta)},\delta)$-differential privacy.
\end{theorem}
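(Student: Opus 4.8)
The plan is to collapse the $T$-fold composition into a single zero-CDP statement and then convert it via Theorem~\ref{the:cdp-dp}. That is, I would first show that $\tilde{\mecha}=(\mecha^{(1)},\dots,\mecha^{(T)})$ is $\rho$-zCDP with $\rho=\max_{t}\rho^{(t)}$, and then substitute this $\rho$ into the zCDP-to-$(\epsilon,\delta)$-DP bound $(\rho+2\sqrt{\rho\log(1/\delta)},\delta)$ to read off the claimed guarantee. This follows the standard composition-paradigm proof through R\'enyi divergence referenced before the theorem.

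For the zCDP step, fix any neighboring datasets $D,D'$ differing in a single record $x$. The structural fact that makes the maximum (rather than a sum) appear is that the per-iteration inputs $\mathbb{D}^{(t)}\subseteq\mathbb{D}$ are disjoint, that is, the minibatch schedule partitions the record set, so $x$ belongs to at most one $\mathbb{D}^{(t^\star)}$. I would then walk the R\'enyi-divergence chain rule across iterations. For $t<t^\star$ the mechanism $\mecha^{(t)}$ never reads $x$, so the intermediate model state handed to iteration $t^\star$ is identically distributed under $D$ and $D'$ and those terms vanish. Conditioned on that common state, $\mecha^{(t^\star)}$ is $\rho^{(t^\star)}$-zCDP by hypothesis, so $\mathcal{D}_\alpha(\mecha^{(t^\star)}(D)\|\mecha^{(t^\star)}(D'))\le\rho^{(t^\star)}\alpha$ for every $\alpha\in(1,\infty)$. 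For $t>t^\star$ the remaining minibatches are $x$-free, so iterations $t^\star+1,\dots,T$ act as data-independent randomized post-processing of the state released up to $t^\star$ and, by the data-processing inequality for R\'enyi divergence, cannot increase it. Chaining these bounds gives $\mathcal{D}_\alpha(\tilde{\mecha}(D)\|\tilde{\mecha}(D'))\le(\max_{t}\rho^{(t)})\,\alpha$ for all $\alpha$, which by Definition~\ref{def::RDP} is exactly $\rho$-zCDP with $\rho=\max_{t}\rho^{(t)}$; if $x$ lies in no $\mathbb{D}^{(t)}$ the divergence is $0$ and the bound is trivial.

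Applying Theorem~\ref{the:cdp-dp} with $\rho=\max_{t}\rho^{(t)}$ then yields $((\max_{t}\rho^{(t)})+2\sqrt{(\max_{t}\rho^{(t)})\log(1/\delta)},\delta)$-DP for any $\delta>0$, which is the statement.

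The main obstacle is justifying the maximum rather than the sum: generic adaptive composition of zCDP would only give $\sum_{t}\rho^{(t)}$, and the sharpening to $\max_{t}\rho^{(t)}$ is a parallel-composition effect that rests entirely on the disjointness of the $\mathbb{D}^{(t)}$ — in practice, that each record participates in a single iteration over the whole run and that the minibatch assignment is fixed in advance rather than drawn from earlier randomness. The delicate point to verify is the adaptivity of $\dphero$: iterations after $t^\star$ consume the prior model $\mod^{(t-1)}$ and hence the guidance matrix $\svec^{(t-1)}$, which does depend on the output of $\mecha^{(t^\star)}$ and therefore on $x$, so one must argue those iterations are genuinely post-processing — their sampling noise is independent of the data, and their only data dependence factors through the already-released state — before the data-processing inequality can be invoked. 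If records instead recur across iterations, the clean bound degrades to a sum of per-pass maxima, so the theorem should be understood in the single-pass (or appropriately sampling-amplified) regime.
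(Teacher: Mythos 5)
Your proposal is correct and follows essentially the same route as the paper: both arguments reduce the composed mechanism to $(\max_t\rho^{(t)})$-zCDP by exploiting the disjointness of the per-iteration subsets $\mathbb{D}^{(t)}$ and then convert to $(\epsilon,\delta)$-DP via the standard tail bound (you invoke Theorem~\ref{the:cdp-dp} as a black box, while the paper re-derives that conversion explicitly through Markov's inequality and the choice $\alpha=\frac{\epsilon+\max_t\rho^{(t)}}{2\max_t\rho^{(t)}}$). If anything, you are more explicit than the paper about the one step it delegates to a citation --- why the moment bound carries a $\max$ rather than a $\sum$, and why the later, $\svec$-adaptive iterations can be treated as post-processing.
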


\begin{proof}
    Consider two neighboring datasets $D,D'$.
By Theorem~\ref{the:dpsgd-para}, our mechanism at each iteration adds the noise equal to being sampled from $\noi(0,\sigma^2)$.
By Definition~\ref{def::RDP} and Definition~\ref{def::arm}, we calculate,
\begin{equation}
	\begin{aligned}
		&\sqrt{2 \pi\sigma^2 }\exp\left[(a-1)\mathcal{D}_\alpha(\mecha(D)|\mecha(D'))\right]\\
  &=\int_{\mathbb{R}}e^{\left({-\alpha(x-\mathcal{F}(D))^2}/{2\sigma^2}-{(1-\alpha)(x-\mathcal{F}(D'))^2}/{2\sigma^2}\right)}dx\\
		&=\int_{\mathbb{R}}e^{\left(-{(x-(\alpha\mathcal{F}(D)+(1-\alpha)\mathcal{F}(D')))^2}/{2\sigma^2}\right)}dx\\
 &\quad+\int_{\mathbb{R}}e^{\left({(\alpha\mathcal{F}(D)+(1-\alpha)\mathcal{F}(D'))^2-\alpha\mathcal{F}(D)^2}/{2\sigma^2}\right)}dx\\
		&\quad-\int_{\mathbb{R}}e^{\left({(1-\alpha)\mathcal{F}(D')^2}/{2\sigma^2}\right)}dx\\
    &=\sqrt{2 \pi\sigma^2 }\exp({\alpha(\alpha-1)(\mathcal{F}(D)-\mathcal{F}(D'))^2}/{(2\sigma^2)})\\
	\end{aligned}
\end{equation}
Thus,
\begin{equation}
	\begin{aligned}
				 &\exp\left[(a-1)\mathcal{D}_\alpha(\mecha(D)|\mecha(D'))\right]\\
		&=\exp({\alpha(\alpha-1)(\mathcal{F}(D)-\mathcal{F}(D'))^2}/(2\sigma^2))\\
		&=\exp({\alpha(\alpha-1)\Delta^2}/{(2\sigma^2)})
	\end{aligned}
\end{equation}
By the result ${\alpha(\alpha-1)\Delta^2}/{(2\sigma^2)}$, this calculation tells that our noise mechanism follows $(\Delta^2/2\sigma^2)$-zCDP at each iteration.
    
By Definition~\ref{def::priloss} and $\mathbb{E}\left[e^{(\alpha-1)\mathcal{L}_{\mathsf{Pri}}^{(o),(t)}}\right]$~\cite{sp/Yu0PGT19}, we have,
\begin{equation}
	\mathbb{E}\left[\left(\frac{{\rm Pr}(\mecha(\aux,D)=o)}{{\rm Pr}(\mecha(\aux,D')=o)}\right)^{\alpha-1}\right]	\leq e^{(\alpha-1)\alpha\cdot(\max_{t} \rho^{(t)})}
\end{equation}
By Markov’s inequality, calculate the probability,
\begin{equation}
	\begin{aligned}
		\mathrm{Pr}[\mathcal{L}_{\mathsf{Pri}}^{(O)}\geq \epsilon ]&=	\mathrm{Pr}[e^{(\alpha-1)\mathcal{L}_{\mathsf{Pri}}^{(O)}}>e^{(\alpha-1)\epsilon}]\\
		&\leq \frac{\mathbb{E}\left[e^{(\alpha-1)\mathcal{L}_{\mathsf{Pri}}^{(O)}}\right]}{e^{(\alpha-1)\epsilon}}\leq e^{(\alpha-1)(\alpha(\max_{t} \rho^{(t)})-\epsilon)}
	\end{aligned}
\end{equation}
Subject to $\sigma={\sqrt{2T\log{(1/\delta)}}}/{\epsilon}$, we use $\alpha = \frac{\epsilon+(\max_{t} \rho^{(t)})}{2\cdot(\max_{t} \rho^{(t)})}$ as derived in~\cite{sp/Yu0PGT19}, and compute,
\begin{equation}
	\begin{aligned}
\mathrm{Pr}[\mathcal{L}_{\mathsf{Pri}}^{(O)}> \epsilon ]\leq e^{-(\epsilon-(\max_{t} \rho^{(t)}))^2/(4\cdot(\max_{t} \rho^{(t)}))}\leq \delta
	\end{aligned}
\end{equation}
For any $S$ in Definition~\ref{def::dp},
\begin{equation}
	\begin{aligned}
		&\leq\mathrm{Pr}[O\in S\wedge \mathcal{L}_{\mathsf{Pri}}^{(O)} \leq \epsilon] +\mathrm{Pr}[\mathcal{L}_{\mathsf{Pri}}^{(O)}>\epsilon]\\
		&\leq\mathrm{Pr}[O\in S\wedge \mathcal{L}_{\mathsf{Pri}}^{(O)} \leq \epsilon] +\delta\\
		&\leq \int_o \mathrm{Pr}[\mecha(D')=o|o\in S]e^\epsilon do +\delta\\
		&=e^\epsilon\mathrm{Pr}[\mecha(D')=S]+\delta
	\end{aligned}
\end{equation}
still satisfies original DP definition, as  in~\cite{tcc/BunS16,csfw/Mironov17}.
\end{proof}

Together, these theorems ensure that DP-Hero SGD matches standard practice in DP-SGD both in per-step and composed privacy, and that the parameterization of $\dphero$ SGD can be mapped to the well-established privacy analysis in the previous work.

\subsection{Linear Layer Analysis as an Example}
We consider a binary classification for simplification and then instantiate a linear layer correlation analysis as an example supplement.
We regard SGD training as ``ground truth''. 
We simplify model parameters as an abstraction of extracted features over the whole dataset.
Define layer-wise model parameters to be $\mod$ in a binary classification model.
Let the $y\in\{-1,1\}$ be model output, $(x,y)$ be the input-output pair.
Let  noise overall features be $\noimatri$, where the norm $\|\noimatri \|$ maintains to be the same.
We expect the noise addition to not affect the space of model parameters and to keep the individual information in the model parameters unleaked.
Our objective is to minimize the variation of model outputs  from DP training and pure model at each training iteration, \ie,
\begin{equation}
	\begin{aligned}
		 \arg \min_{\|\noimatri\|}  |\sum_{i}\int(\mod+\noimatri)x_i{y_i}' \,dn -\sum_{i}\int \mod x_iy_i \,dn|
	\end{aligned}
\end{equation}
Consider that noise variable $n$ being injected into each feature could be continuous ideally. 
Since it is sampled from a distribution with a mean value of $0$, the integration of $n$ equals $0$, which could be removed for simplification.

We expect the first part to be large (denoting high utility) and the difference between the two parts to be as small as possible.
Then, we define the variance to be,
\begin{equation}
	\label{equa::vardiff}
	\var[\sum_{i}(\mod+\noimatri)x_i{y_i}'-\sum_{i}\mod x_iy_i]
\end{equation}
Equation~\ref{equa::vardiff} measures the difference of average correction of two models.
Equation~\ref{equa::vardiff} can be simplified by the expectation,
\begin{equation}
	\label{equa::varexp}
\mathbb{E}[\sum_{i}((\mod+\noimatri)x_i{y_i}'-\mod x_iy_i)]
\end{equation}
For linear transformation, we get,
\begin{equation}
\label{eqa::binary}
\begin{aligned}
    &(\mod+\noimatri)^\top x_i{y_i}'-\mod^\top x_iy_i \\
   &= (\mod+\noimatri)^\top x_i{(y_i+\Delta y_i)}-\mod^\top x_iy_i\\
		&= \mod^\top x_i{\Delta y_i}+\noimatri^\top x_i{y_i}+\noimatri^\top x_i{\Delta y_i} \\
		&=   (\mod+\noimatri)^\top x_i{\Delta y_i}+\noimatri^\top x_i{y_i}\\
		&= (\mod+\noimatri)^\top x_i\noimatri^\top x_i+\noimatri^\top x_i\mod^\top x_i\\
    & =(\mod^\top\noimatri^\top+\noimatri^\top\noimatri^\top+\noimatri^\top\mod^\top){x_i}^2
\end{aligned}
\end{equation}
Specifically, if $(\mod+\noimatri)^{\top}x_i$ is close to $y_i$, the differentially-private (noisy for short) model accuracy is high.
To attain the minimizer, we could solve Equation~\ref{eqa::binary} by $\mod \bot \noimatri$. 
In this example analysis, attaining support for the noise-model relation is enough for simplification.

\begin{figure*}
	\centering
 \begin{minipage}[t]{0.31\textwidth}
 \centering
		\includegraphics[width = 1.8in]{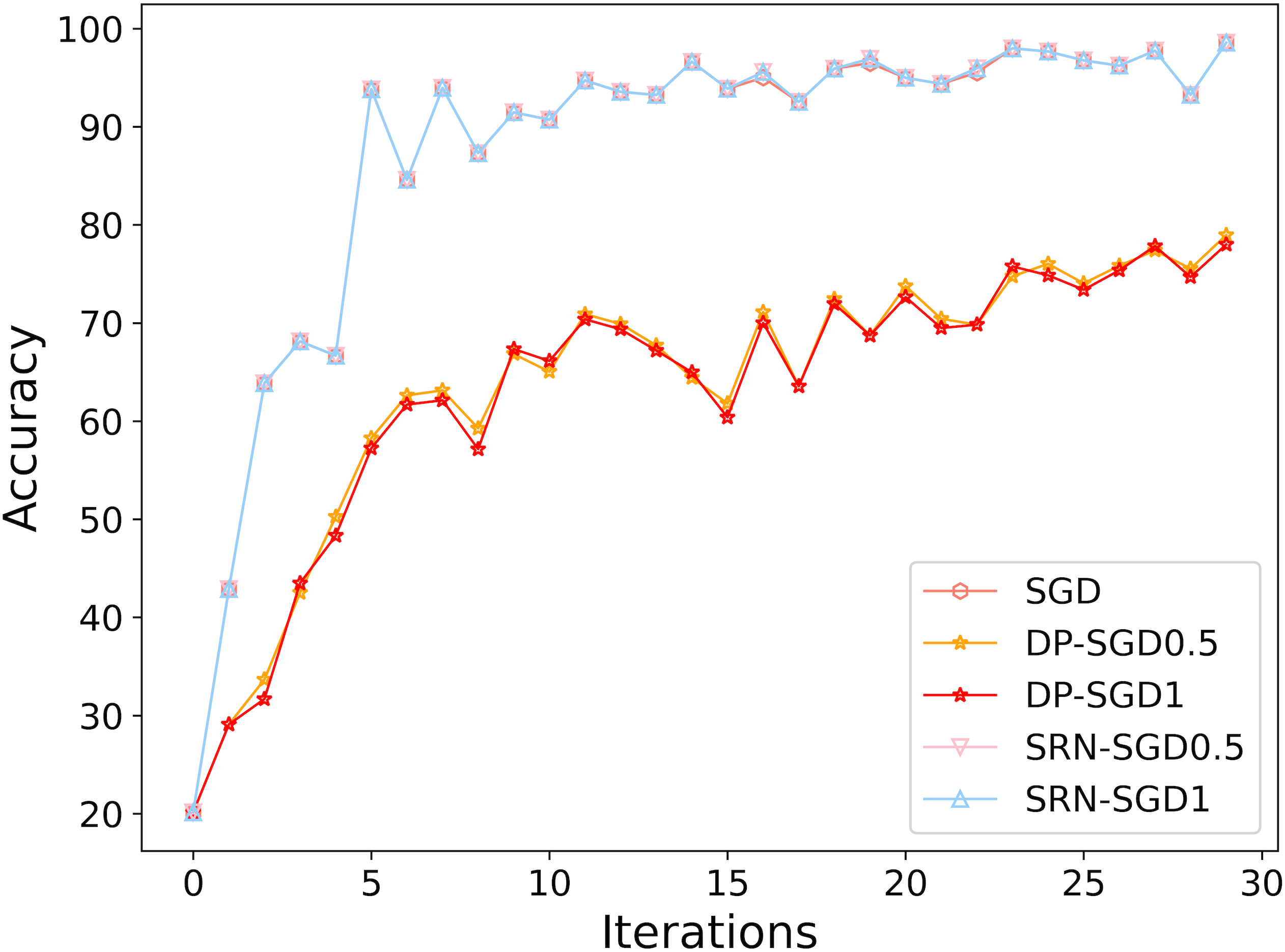}
		\subcaption{$\sigma = 0.5,1$}
		\label{sigma_0.5_1_iteration}
	\end{minipage}
	\begin{minipage}[t]{0.31\textwidth}
 \centering
		\includegraphics[width = 1.8in]{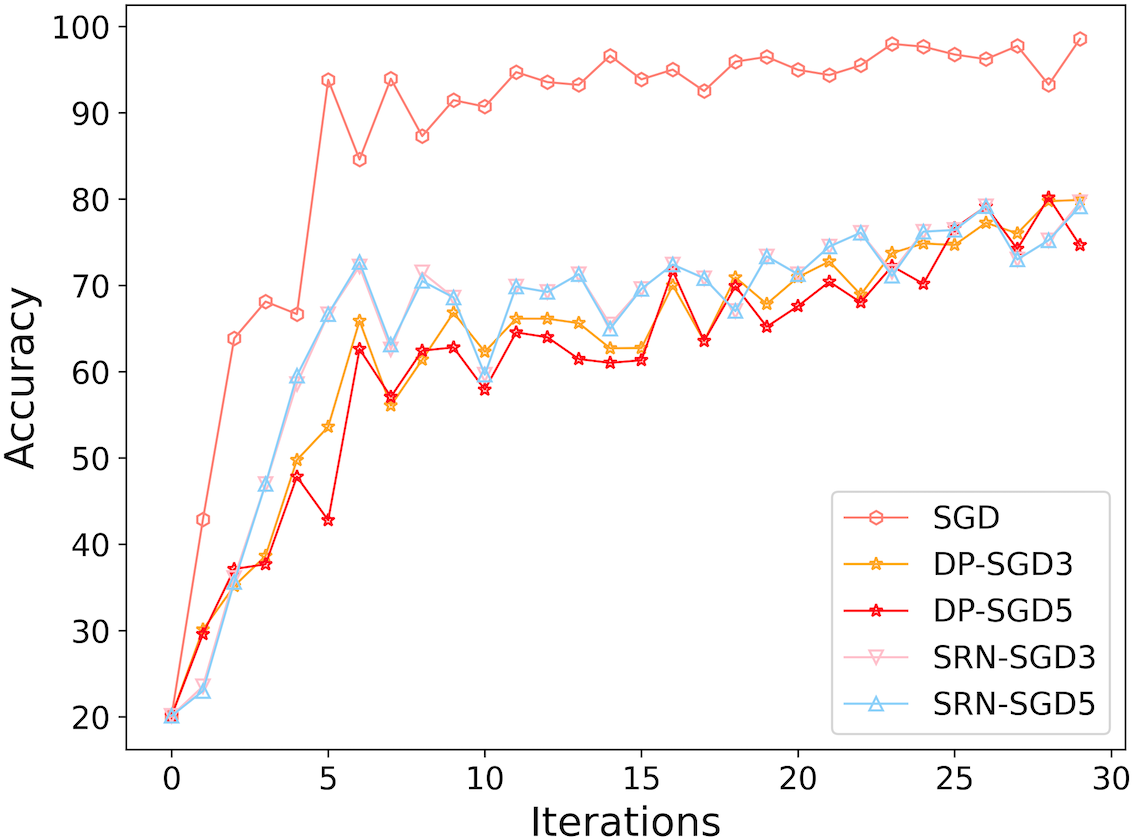}
		\subcaption{$\sigma = 3,5$}
		\label{sigma_3_5_iteration}
	\end{minipage}
	\begin{minipage}[t]{0.31\textwidth}
 \centering
		\includegraphics[width = 1.8in]{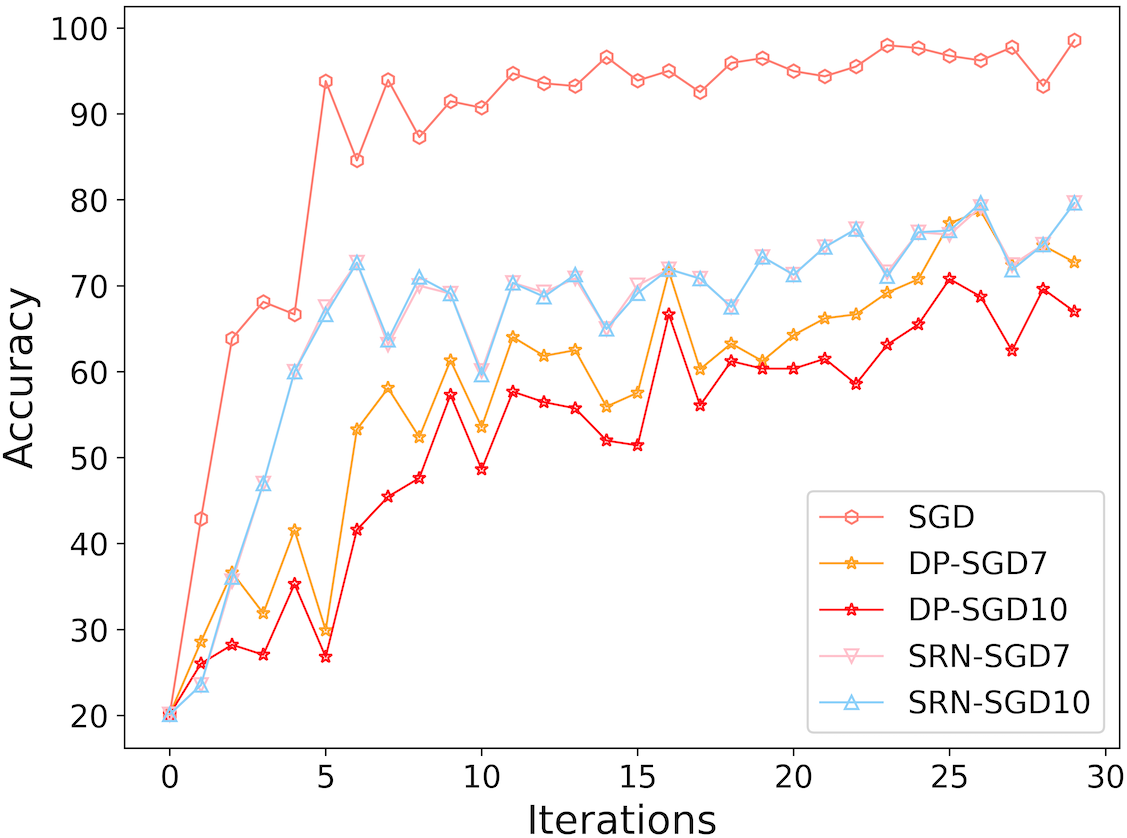}
		\subcaption{$\sigma = 7,10$}
		\label{sigma_7_10_iteration}
	\end{minipage}
 \caption{Test Accuracy on the MNIST Dataset with Different $\sigma$ in Iterations}
	\label{test_acc_mnist_sigma_iter}
 \end{figure*}
\begin{figure*}
\centering
    \begin{minipage}[t]{0.31\textwidth}
     \centering
		\includegraphics[width = 1.8in]{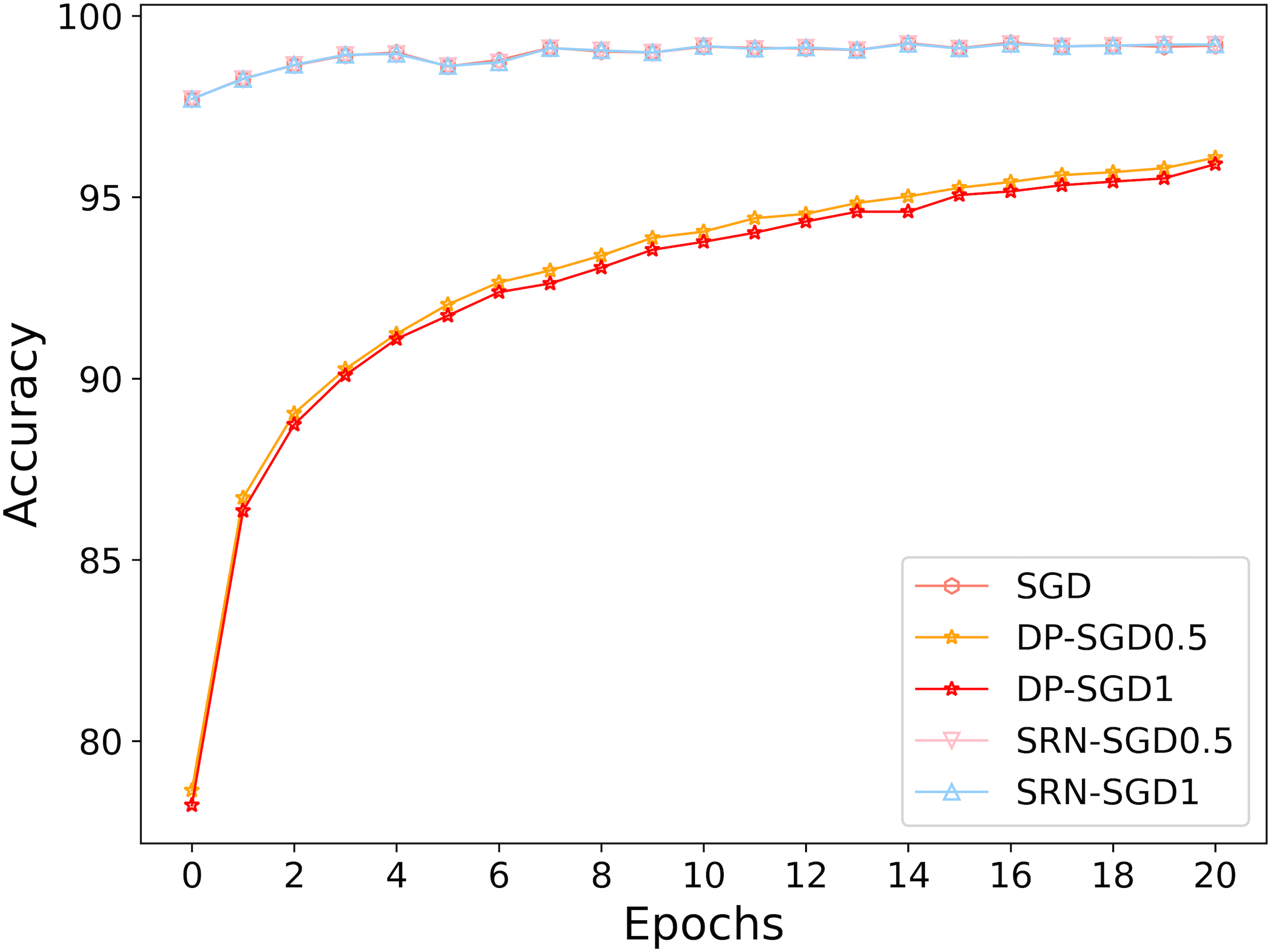}
		\subcaption{$\sigma = 0.5,1$}
		\label{sigma_0.5_1_epoch}
	\end{minipage}
	\begin{minipage}[t]{0.31\textwidth}
  \centering
		\includegraphics[width = 1.8in]{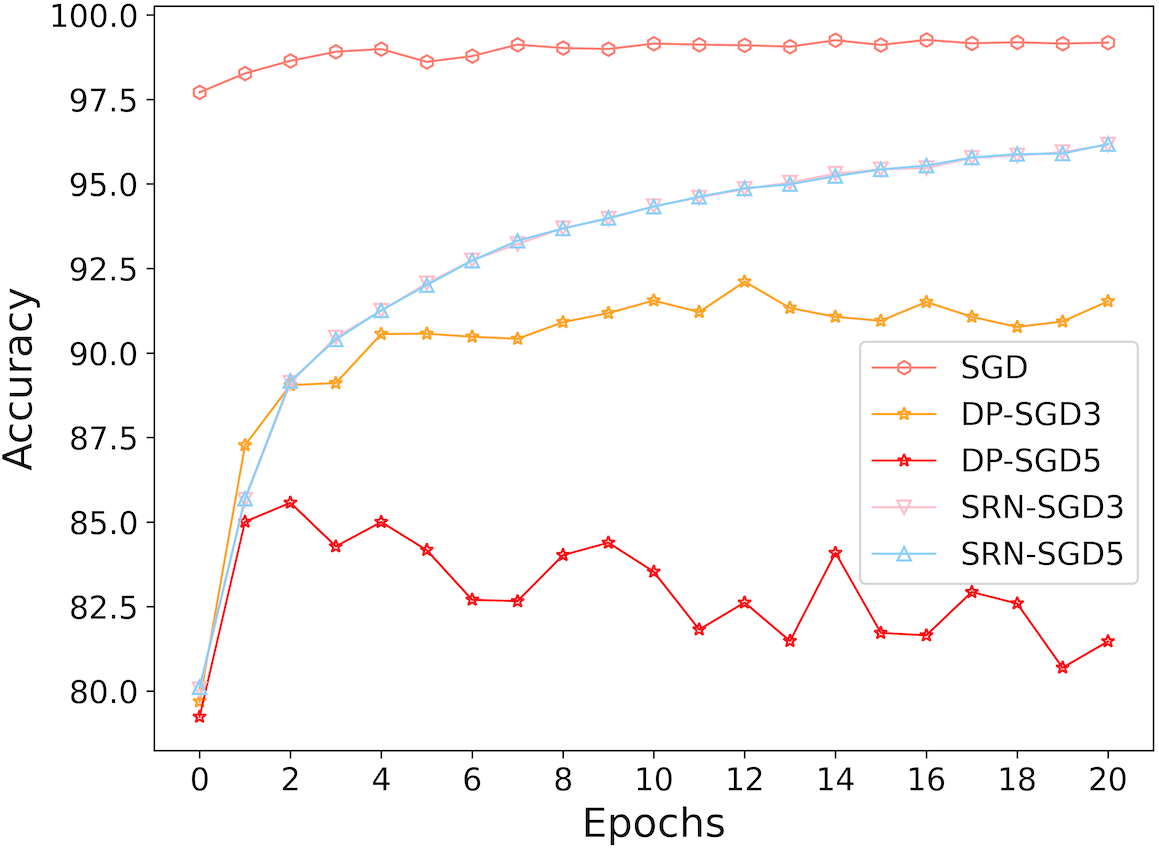}
		\subcaption{$\sigma = 3,5$}
		\label{sigma_3_5_epoch}
	\end{minipage}
	\begin{minipage}[t]{0.31\textwidth}
  \centering
		\includegraphics[width = 1.8in]{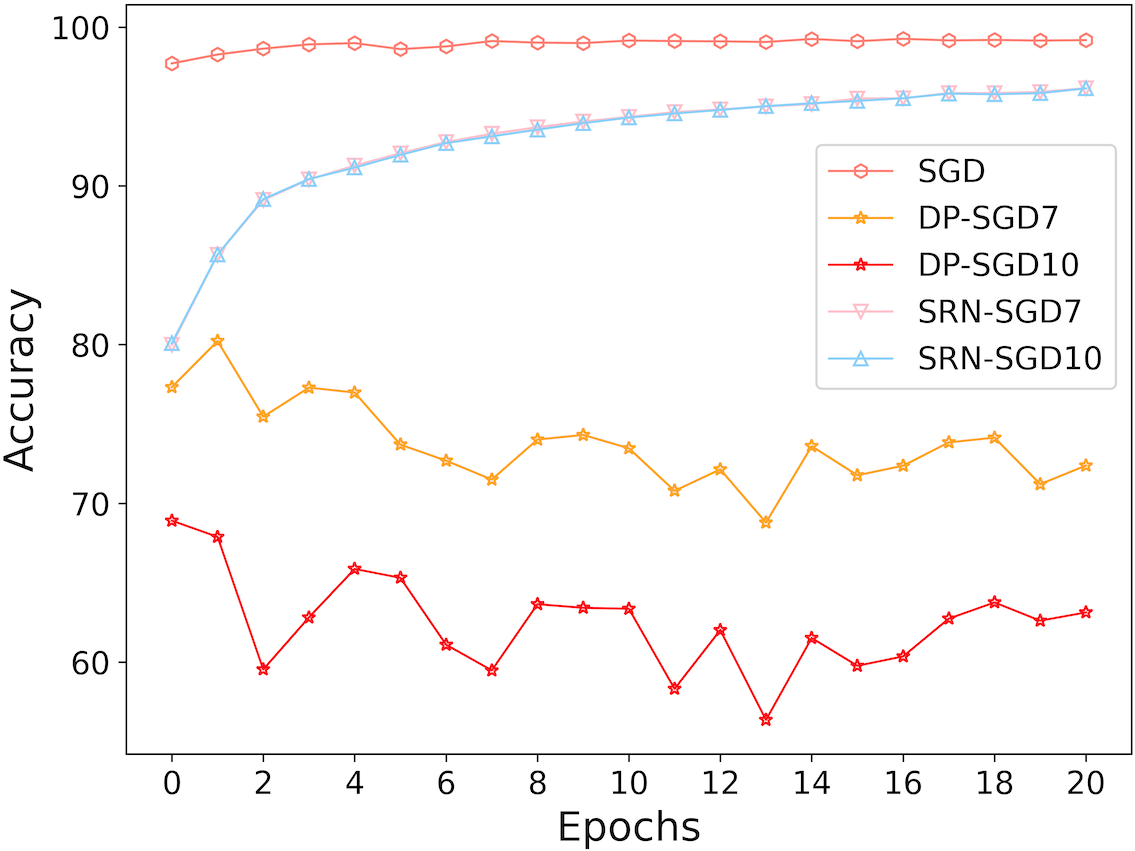}
		\subcaption{$\sigma = 7, 10$}
		\label{sigma_7_10_epoch}
	\end{minipage}
	\caption{Test Accuracy on MNIST Dataset with Different $\sigma$  in Epochs}
	\label{test_acc_mnist_sigma}
\end{figure*}


\section{Experimental Evaluation and Explanation}
\label{sec::exp}

Our experiments are conducted on a commodity PC running Ubuntu with Intel Xeon(R) E5-2630 v3 CPU, 31.3 GiB RAM, and GeForce RTX $3090$ Ti GPU.
In this section, we report the convergence/training performance and test accuracy (varying with $\epsilon$) by conducting an extensive comparison with state-of-the-arts, including  
DP-SGD~\cite{ccs/AbadiCGMMT016}, IPA-RF~\cite{nips/FeldmanZ21}, GDP~\cite{corr/abs-1911-11607}, SAS-DP~\cite{bigdataconf/ChenL20}, DP-GED~\cite{corr/abs-2007-11524}, DP-MP~\cite{sp/Yu0PGT19}, LDP~\cite{nips/GhaziNR21}, FeatureDP~\cite{iclr/TramerB21}, and FedFed~\cite{nips/YangZZ0PL023} over standard benchmark datasets.
By employing GridCam~\cite{iccv/SelvarajuCDVPB17}, we visualize differentially private training  to show the difference in representation.

\subsection{Experimental Setup}
\label{app:exp-config}
\subsubsection{Configuration and Dataset}
The baseline DP-SGD implementation is pyvacy (\url{https://github.com/ChrisWaites/pyvacy}), and , while federated learning with $\dphero$ has been realized in FedFed framework~\cite{nips/YangZZ0PL023}.
We configure experimental parameters with reference to \cite{ccs/AbadiCGMMT016}'s  setting.
To be specific, we configure lot size $L=10,50,200,400$, $\delta=1.0^{-5}\text{ or }1.0^{-6}$, and learning rate $\eta=0.1\text{ or }0.2$. 
The noise level $\sigma$ is set to be $0.5,1,3,5,7,10$ for comprehensive comparison.
Fairly, we use identical $\epsilon$ as in state-of-the-art and compare test accuracy.

Experimental evaluations are performed on the MNIST dataset~\cite{pieee/LeCunBBH98} and the CIFAR-10 dataset~\cite{krizhevsky2009learning}. 
MNIST dataset includes $10$ classes of hand-written digits of $28 \times28$ gray-scale.
It contains $60,000$ training examples and $10,000$ testing examples.
CIFAR-10 dataset contains $10$ classes  of images, of $32\times32$ color-scale with three channels, 
It contains $50,000$ in training examples and $10,000$ in testing examples.

 \subsubsection{Model Architecture} On the MNIST dataset, we use {LeNet}~\cite{pieee/LeCunBBH98}, which reaches accuracy of $99\%$ in about $10$ epochs without privacy.
On CIFAR-10, we use two {convolutional layers} followed by two fully connected layers. 
In detail, convolution layers use $5\times 5$ convolutions, followed by a ReLU and $2\times2$ max-pooling. 
The latter is flattened to a vector that gets fed into two {fully connected layers} with $384$ units. 
This architecture, non-privately, can get to about $86\%$ accuracy in  $\sim200$ epochs.

\subsection{Model Utility and Training Performance}
\label{sec::model_utility}
\subsubsection{Convergence Analysis}
Figure~\ref{test_acc_mnist_sigma_iter}, Figure~\ref{test_acc_mnist_sigma}, and Figure~\ref{acc_cifar10} show the process of convergence on the MNIST and CIFAR-10 datasets in iterations and epochs when $\sigma=0.5,1,3,5,7,10$, respectively. 
The epoch-based figures show the whole training process on two datasets, while the iteration-based figures only display the first $30$ iterations meticulously due to $x$-axis length limitation.

For the very-tiny noise level $\sigma=0.5,1$, $\dphero$ SGD reaches an almost identical convergence route as pure SGD when training over the MNIST dataset. 
For DP-SGD, iteration-wise accuracy decreases at the start of training.
For a relatively small noise level $\sigma=3,5$, we can see that $\dphero$ SGD  converges more stable.
Although $\dphero$ SGD can not reach the identical accuracy as pure SGD, its shape (e.g., from iteration=$[5,10]$ and epoch=$[4,20]$) of convergence is much more similar to SGD than DP-SGD.
For $\sigma\geq5$, the convergence of DP-SGD turns out to be very unstable, while $\dphero$ SGD looks more robust.
Besides, the shaking of $\dphero$ SGD is also relatively smaller, which contributes to step-wise stability during a whole training process. 

On CIFAR-10, Figure~\ref{acc_cifar10} shows the test accuracy by training from scratch.
Recall that DP-SGD over CIFAR-10 typically requires a pretraining phase.
For $\sigma=0.5,1$,  $\dphero$ SGD attains competitive training convergence compared with SGD training. 
For $\sigma=3,5$, $\dphero$ SGD training still moves towards convergence, while DP-SGD could not.
For $\sigma=7,10$, both $\dphero$ SGD and DP-SGD could not converge, whereas $\dphero$ SGD collapses later.

\begin{figure*}[h]
	\centering
 \begin{minipage}[t]{0.31\textwidth}
		\centering
		\includegraphics[width = 1.8in]{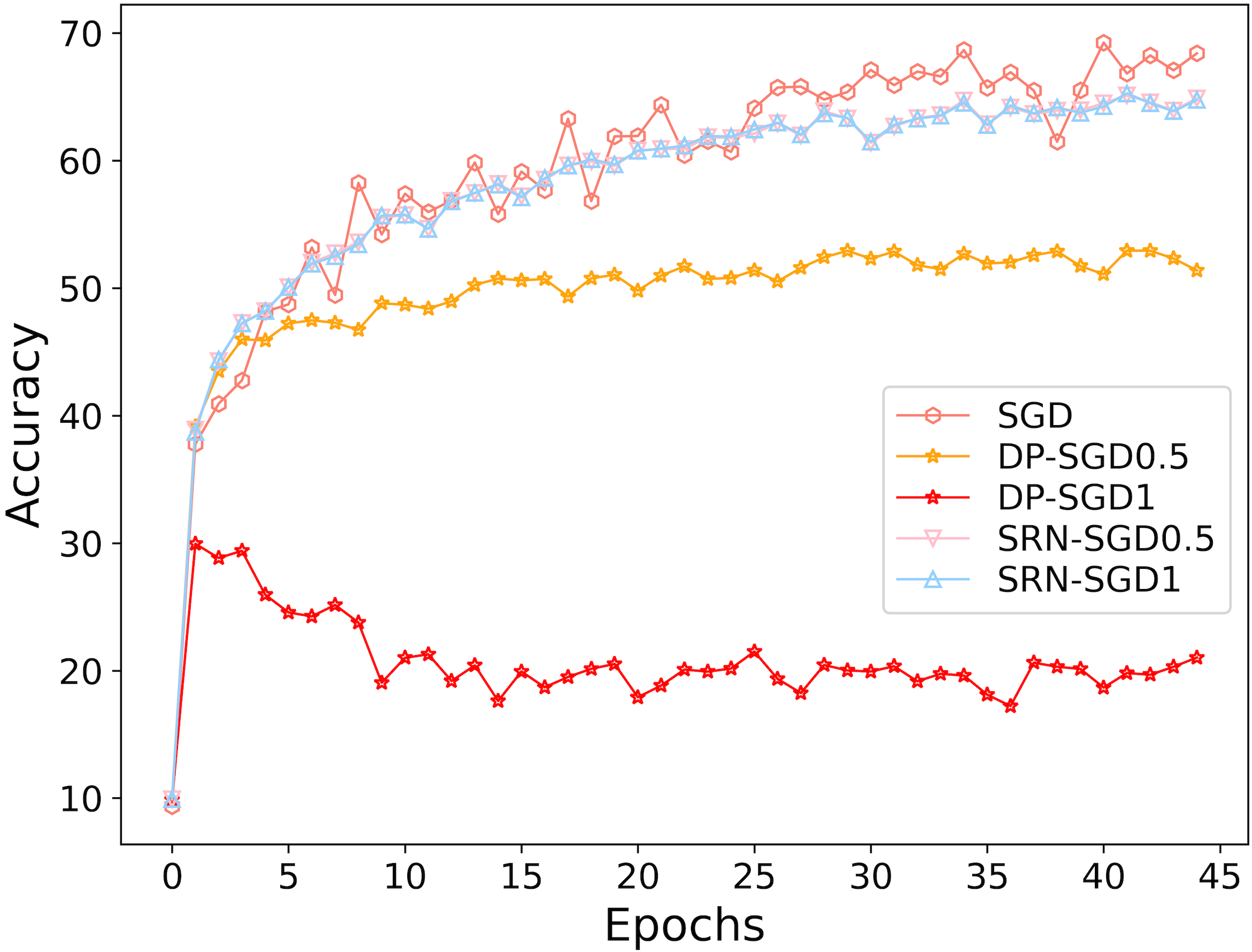}
		\subcaption{$\sigma=0.5,1$}
		\label{cifarsigma0.51}
	\end{minipage}
	\begin{minipage}[t]{0.31\textwidth}
		\centering
		\includegraphics[width = 1.8in]{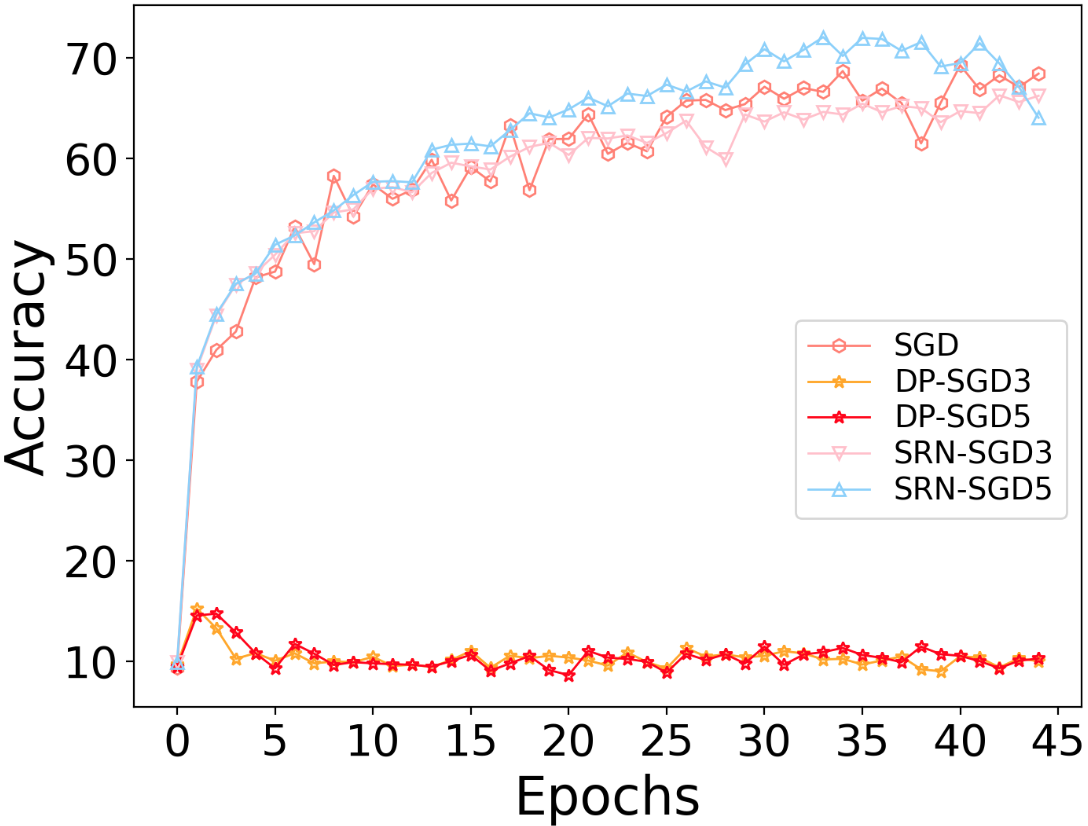}
		\subcaption{$\sigma=3,5$}
		\label{cifarsigma35}
	\end{minipage}
	\begin{minipage}[t]{0.31\textwidth}
	\centering
		\includegraphics[width = 1.8in]{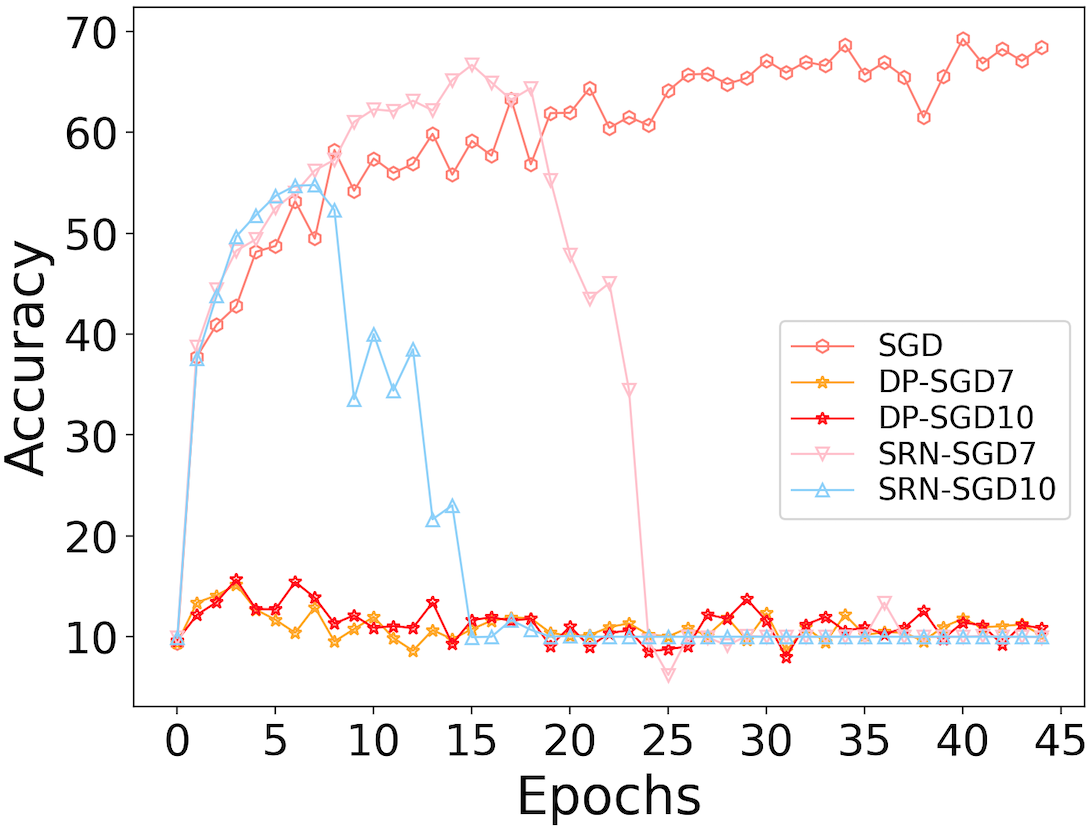}
		\subcaption{$\sigma=7,10$}
		\label{cifarsigma710}
	\end{minipage}
	\caption{Test Accuracy on CIFAR-10 with Different $\sigma$  in Epochs}
	\label{acc_cifar10}
\end{figure*}

\subsubsection{Model Accuracy}
Table~\ref{tab::testacc} shows comparative results with prior works.
To be fair, we compare the test accuracy of the trained models under the constraint of identical $\epsilon$.
We can see that $\dphero$ improves the test accuracy of 
state-of-the-arts~\cite{ccs/AbadiCGMMT016,nips/FeldmanZ21,corr/abs-1911-11607,bigdataconf/ChenL20,corr/abs-2007-11524,sp/Yu0PGT19,nips/GhaziNR21,iclr/TramerB21}.
In most cases, our $\dphero$ SGD could attain $>98\%$ test accuracy on the MNIST dataset, whereas other works achieve $95\%\sim97\%$.
Only several works were trained over the CIFAR-10 dataset, yet with the $<60\%$ accuracy. 
In contrast, $\dphero$ SGD could achieve $>64.5\%$ accuracy, showing much better results.

Specifically, $\dphero$ SGD improves $18\%$ accuracy on \cite{corr/abs-2007-11524}, $47\%$ accuracy on \cite{sp/Yu0PGT19}, and $22\%$ accuracy on \cite{bigdataconf/ChenL20}.
Training a DP model over the CIFAR-10 dataset may require a pretraining phase, whereas $\dphero$ SGD training could alleviate this.
It shows that $\dphero$ SGD behaves better on more representative datasets (e.g., CIFAR-10$>$MNIST) than DP-SGD.
Figure~\ref{acc_epsi} shows a box-whisker plot on accuracy given varying $\epsilon$.
Except for following identical configuration of $\epsilon$, we show additional results with $\epsilon=1,2,3,4$.
The test accuracy is relatively stable for different $\epsilon$ in different training processes.
When $\epsilon$ is very large, although test accuracy is high, DP protection may not be sufficient for practical usage.
Experimental results show that $\dphero$ SGD is more robust against large noise and supports faster convergence, especially for representative datasets.

When extending to FedFed, accuracy is sensitive to the degree of data heterogeneity (non-IID-ness) among clients. 
Under the IID scenario, where each client’s local data distribution mirrors the global distribution, the global model can closely match the centralized $98\%$ approximately over MNIST, typically reaching $97\%$-$98\%$ accuracy. However, as data becomes more heterogeneous, such as when clients have unbalanced or partially disjoint label distributions, overall accuracy declines. Moderate non-IID partitions (Dirichlet distribution parameter $\alpha>10$) generally yield $95\%$-$97\%$ accuracy, while higher heterogeneity ($\alpha<10$) can reduce accuracy to $90\%$-$95\%$. 

	\begin{table}[!t]
		\caption{Test Accuracy Compared with Prior Top-tier Works}
		\label{tab::testacc}
			\setlength\tabcolsep{3pt}
			\centering
				\begin{tabular}{c|c|c|c|c}
				\hline
				&  && &  $\dphero$ SGD  \\  
				\multirow{-2}{*}{\textbf{Dataset}} &   \multirow{-2}{*}{\textbf{Work}}      & \multirow{-2}{*}{\textbf{$\epsilon$}}   & \multirow{-2}{*}{\textbf{Accuracy}}      & \textbf{Accuracy    }       \\ \hline
				& & $2$ & $95\%$ & $\mathbf{98.10\%}$ \\
				& & $8$ & $97\%$ &  $\mathbf{98.11\%}$ \\
				& \multirow{-3}{*}{DP-SGD} & $\infty$ &${98.3\%}$ &  $98.12\%$ \\
				\cline{2-5}
				&                  &     $1.2$      &   $96.6$        &   $\mathbf{98.13\%}$             \\
				&\multirow{-2}{*}{IPA-RF} & $3$ & $97.7\%$ & $\mathbf{98.11\%}$\\
				\cline{2-5}
				&             &      $2.32$    &   $96.6\%$  &    $96.18\%$                    \\
				&      \multirow{-2}{*}{GDP}       &      $5.07$    &   $97.0\%$  &  $\mathbf{98.10\%}$                 \\
				\cline{2-5}
				&      SAS-DP         &    $2.5$     &    $90.0\%$       &    $\mathbf{98.12\%}$            \\
				&      DP-GED          &    $3.2$       &   $96.1\%$       &    $\mathbf{98.11\%}$                \\
				&      DP-MP        &      $6.78$    &  $93.2\%$         &    $\mathbf{98.10\%}$              \\
				\cline{2-5}
				& &$1$& $95.82\%$ & $\mathbf{98.11\%}$\\
				& \multirow{-2}{*}{LDP}&$2$& $98.78\%$ &$98.10\%$\\
				\cline{2-5}
				&                                                                  &         $1.2$, $2.0$                     &               &           $\mathbf{98.13\%}$             \\
				\multirow{-13}{*}{{MNIST}}  &      \multirow{-2}{*}{FeatureDP}       &   $2.5,2.9$   &  \multirow{-2}{*}{$\approx 98\%$}         &    $\mathbf{98.12\%}$         \\\hline
				&   DP-GED        &    $3.0$       &   $55.0\%$       &     $\mathbf{64.93\%}$          \\
				&    DP-MP        &      $6.78$    &  $44.3\%$         &      $\mathbf{65.04\%}$                 \\
				\multirow{-3}{*}{{CIFAR-10}}       &      SAS-DP         &    $8.0$     &    $53.0\%$       &       $\mathbf{65.12\%}$             \\\hline
				
			\end{tabular}
		
	\end{table}
\begin{figure*}[h]
	\centering
 \begin{minipage}[b]{0.31\textwidth}
		\centering
		\includegraphics[width = 1.8in]{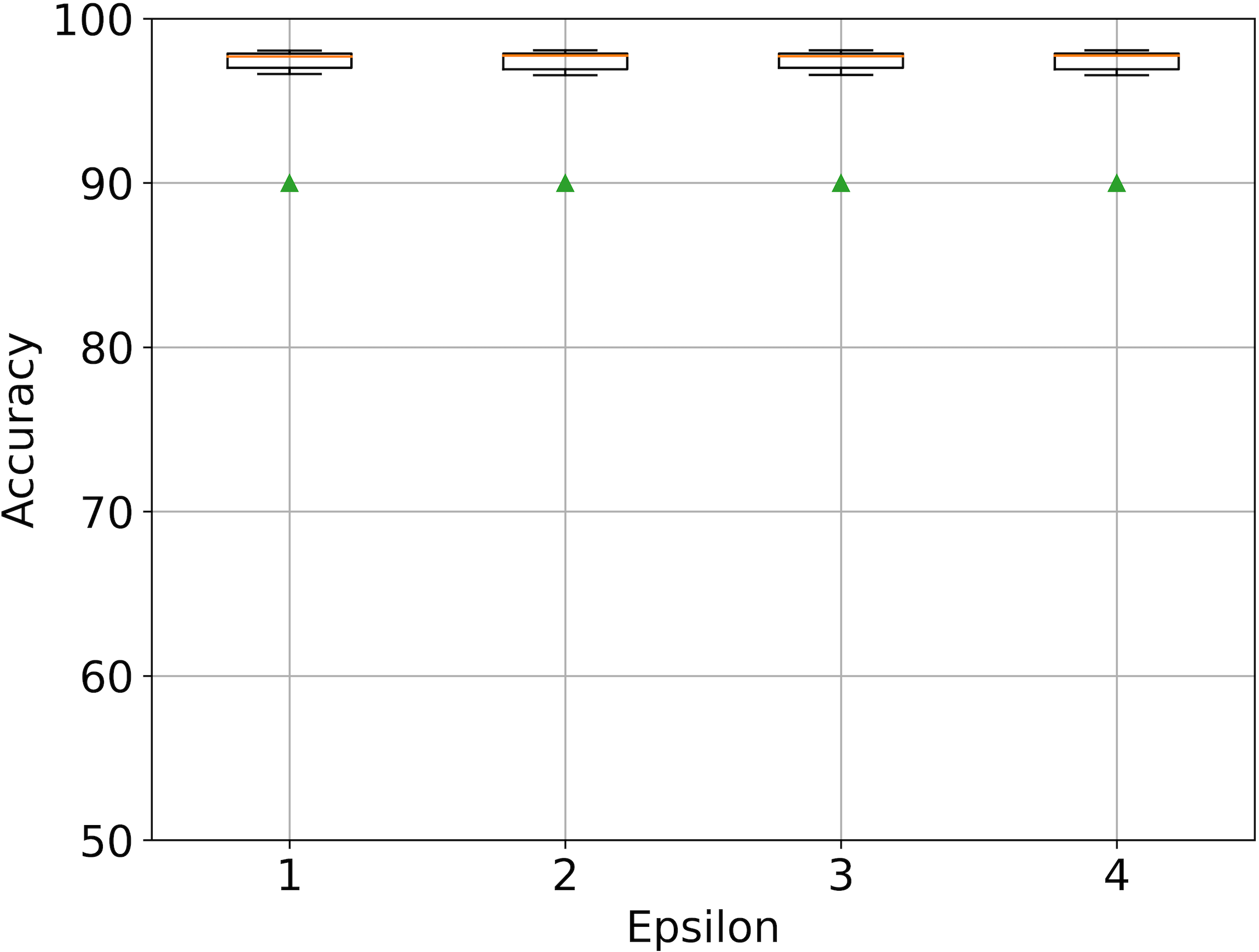}
		\subcaption{Test Accuracy with Preset $\epsilon$}
		\label{acc_1epsilon_com}
	\end{minipage}
	\begin{minipage}[b]{0.31\textwidth}
		\centering
		\includegraphics[width = 1.8in]{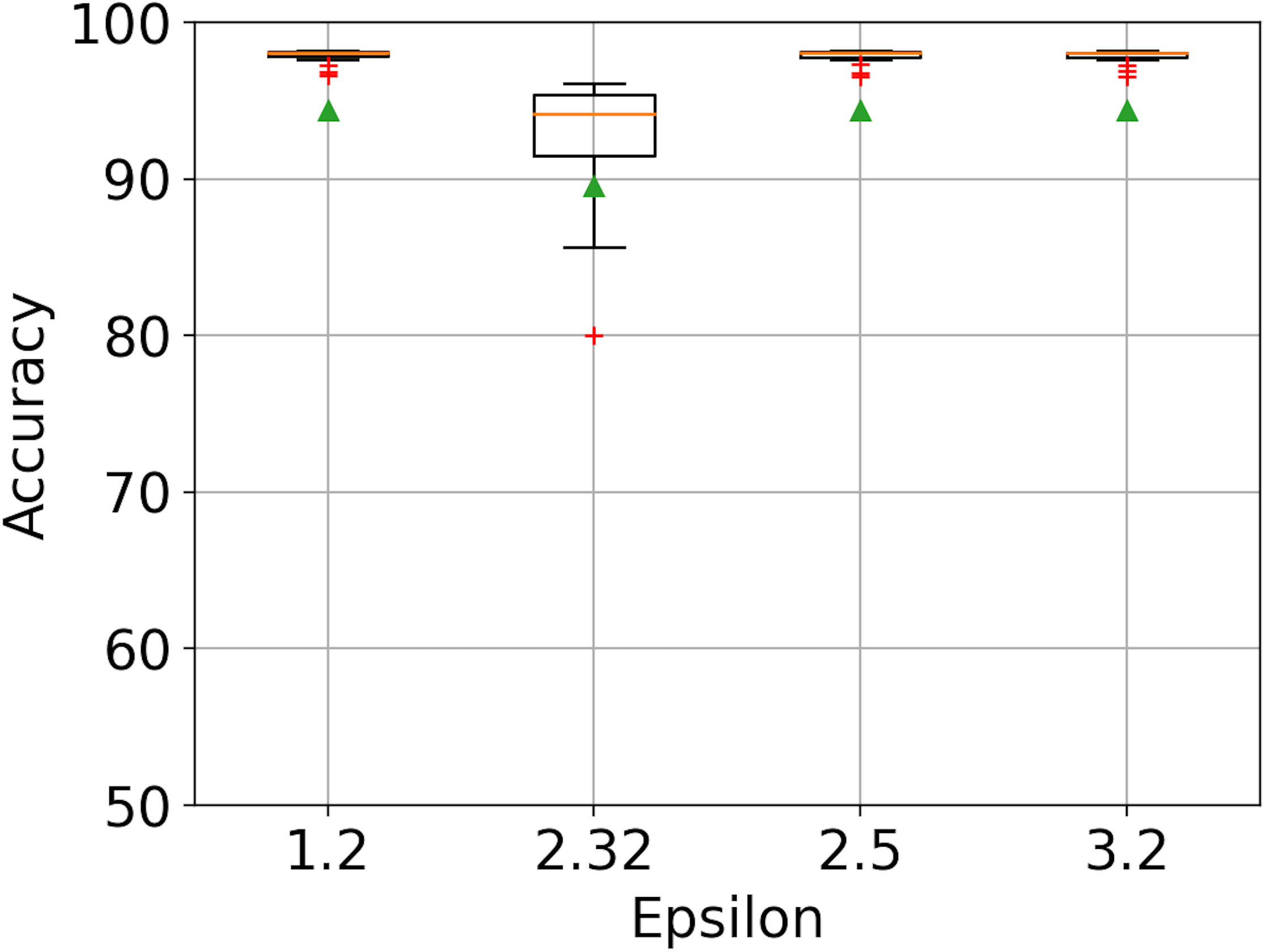}
		\subcaption{Test Accuracy with Small $\epsilon$}
		\label{acc_1epsilon}
	\end{minipage}
	\begin{minipage}[b]{0.31\textwidth}
		\centering
		\includegraphics[width = 1.8in]{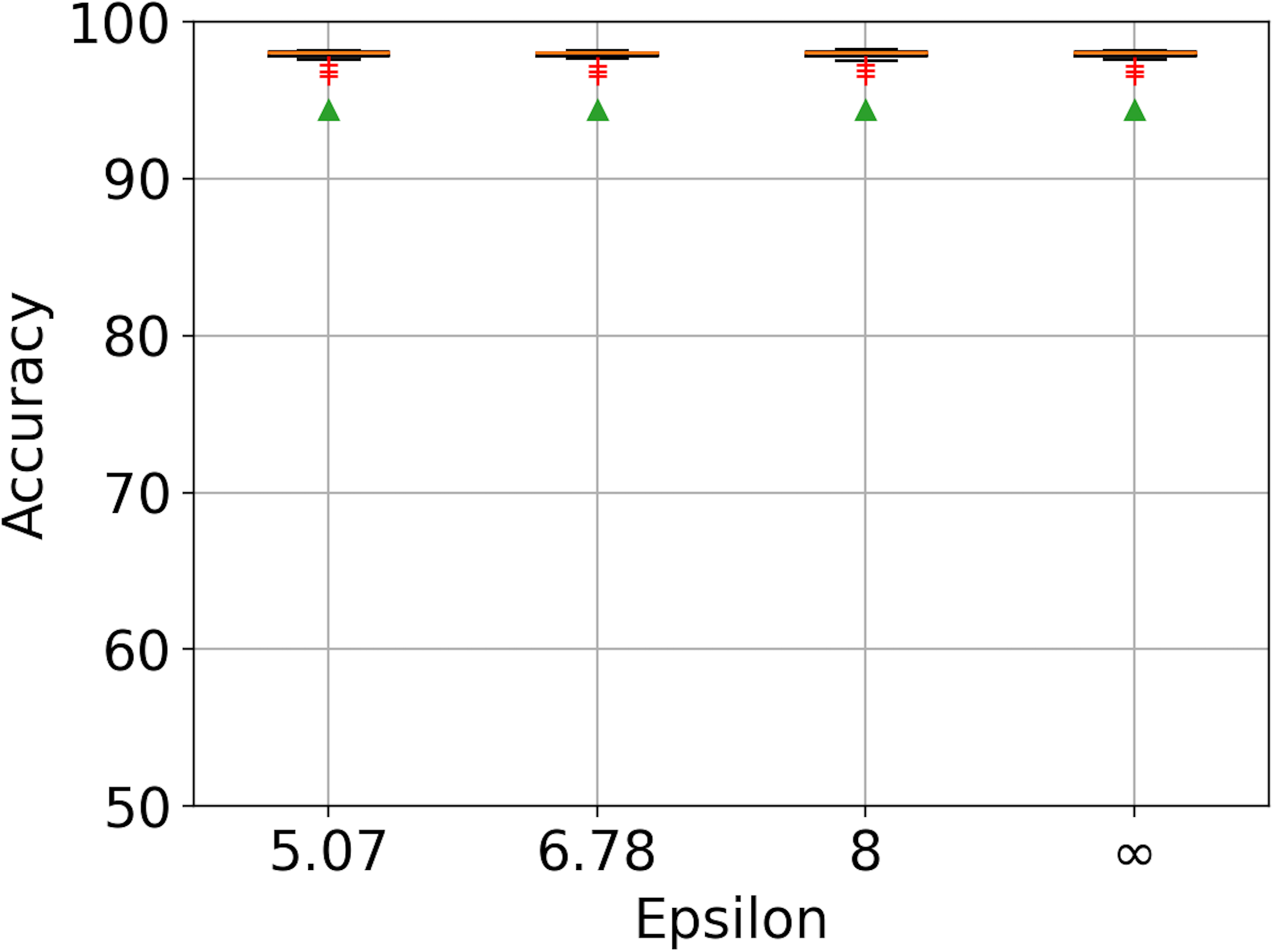}
		\subcaption{Test Accuracy with Large $\epsilon$}
		\label{acc_2epsilon}
	\end{minipage}
	\caption{Test Accuracy on MNIST Dataset With Varying $\epsilon$}
	\label{acc_epsi}
\end{figure*}

\begin{figure*}[h]
\centering
     \begin{minipage}[t]{0.31\textwidth}
     \centering
		\includegraphics[width = 1.8in]{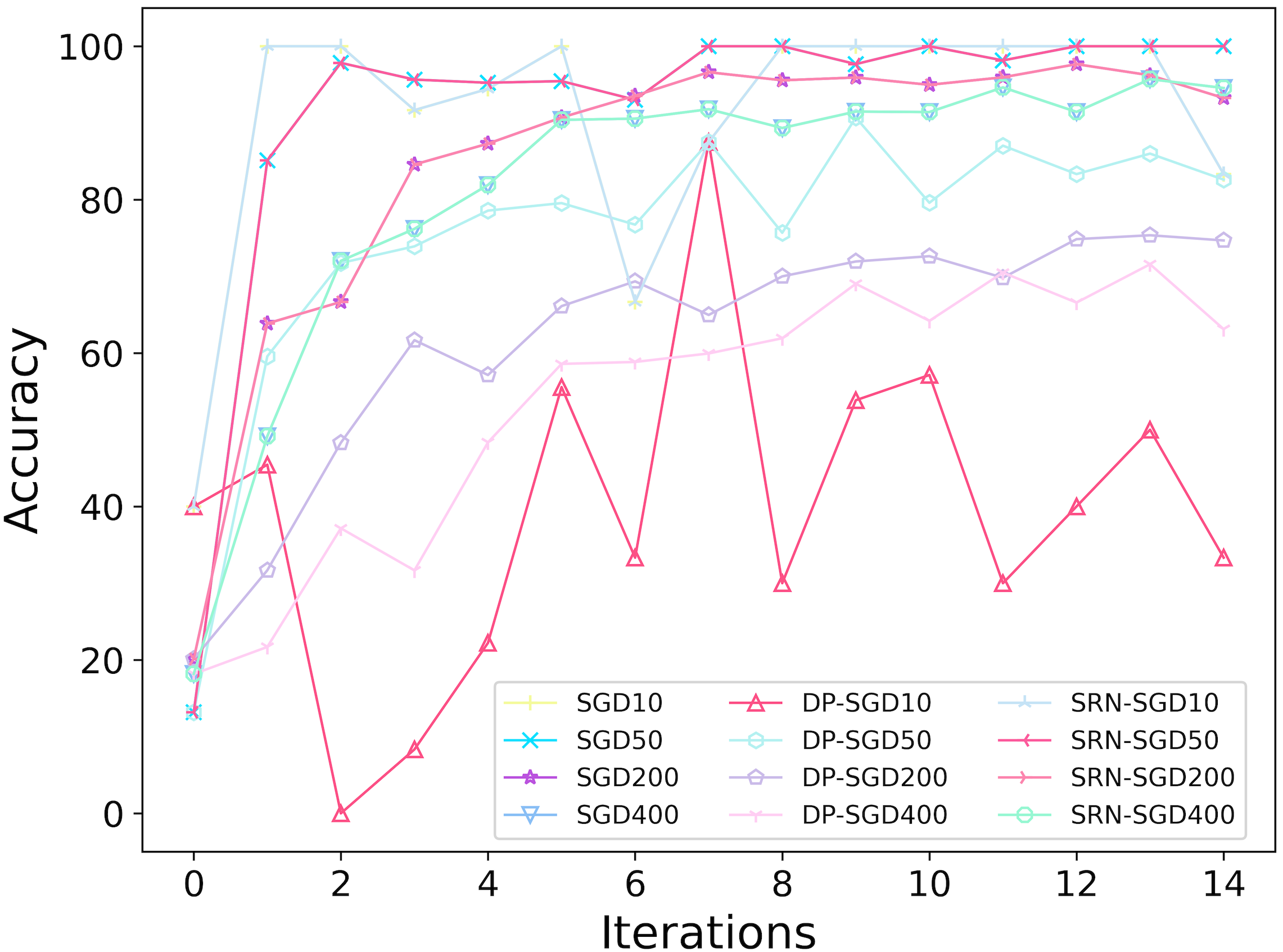}
		\subcaption{$L=10,50,200,400,\sigma=1$}
		\label{mnist-lot-iterations-sigma1}
	\end{minipage}
	\begin{minipage}[t]{0.31\textwidth}
 \centering
		\includegraphics[width = 1.8in]{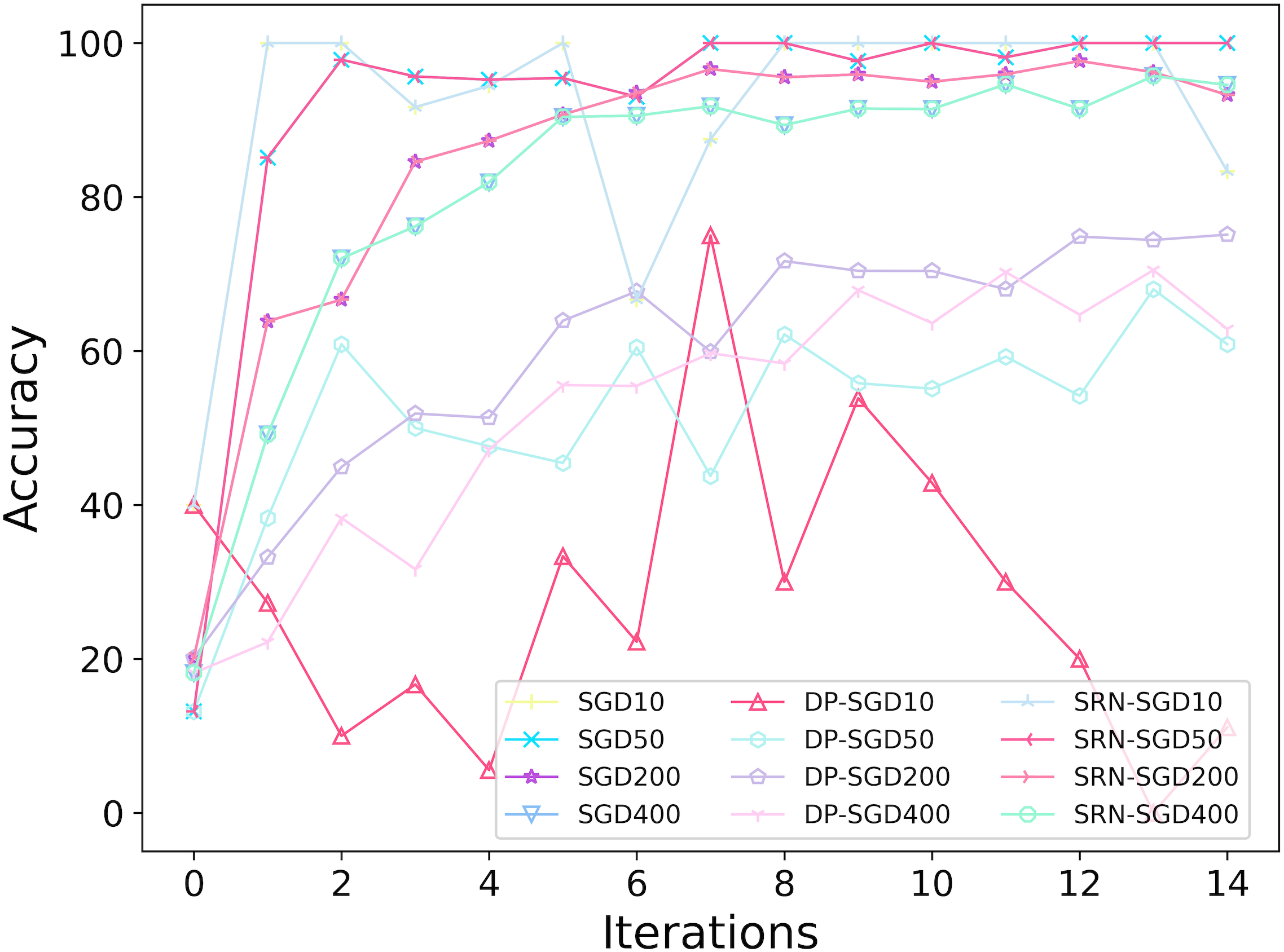}
		\subcaption{$L=10,50,200,400,\sigma=3$}
		\label{mnist-lot-iterations-sigma3}
	\end{minipage}
	\begin{minipage}[t]{0.31\textwidth}
 \centering
		\includegraphics[width = 1.8in]{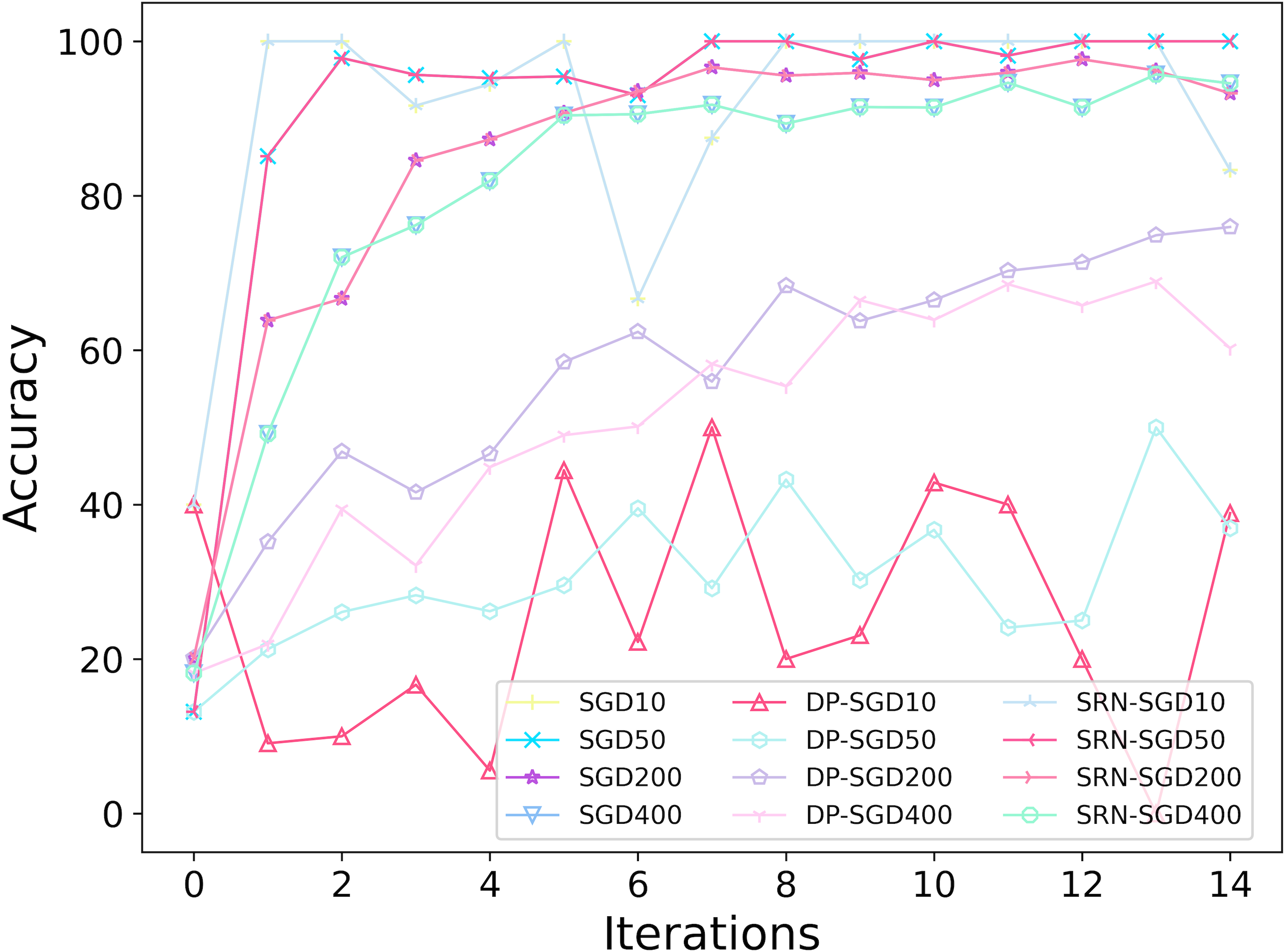}
		\subcaption{$L=10,50,200,400,\sigma=5$}
		\label{mnist-lot-iterations-sigma5}
	\end{minipage}
	\caption{Test Accuracy on MNIST Dataset with Varying Lot Sizes}
	\label{mnist-lot-iterations-sigma}
\end{figure*}

\begin{figure*}[h]
	\centering
   \begin{minipage}[t]{0.3\textwidth}
		    \includegraphics[width = 1.8in]{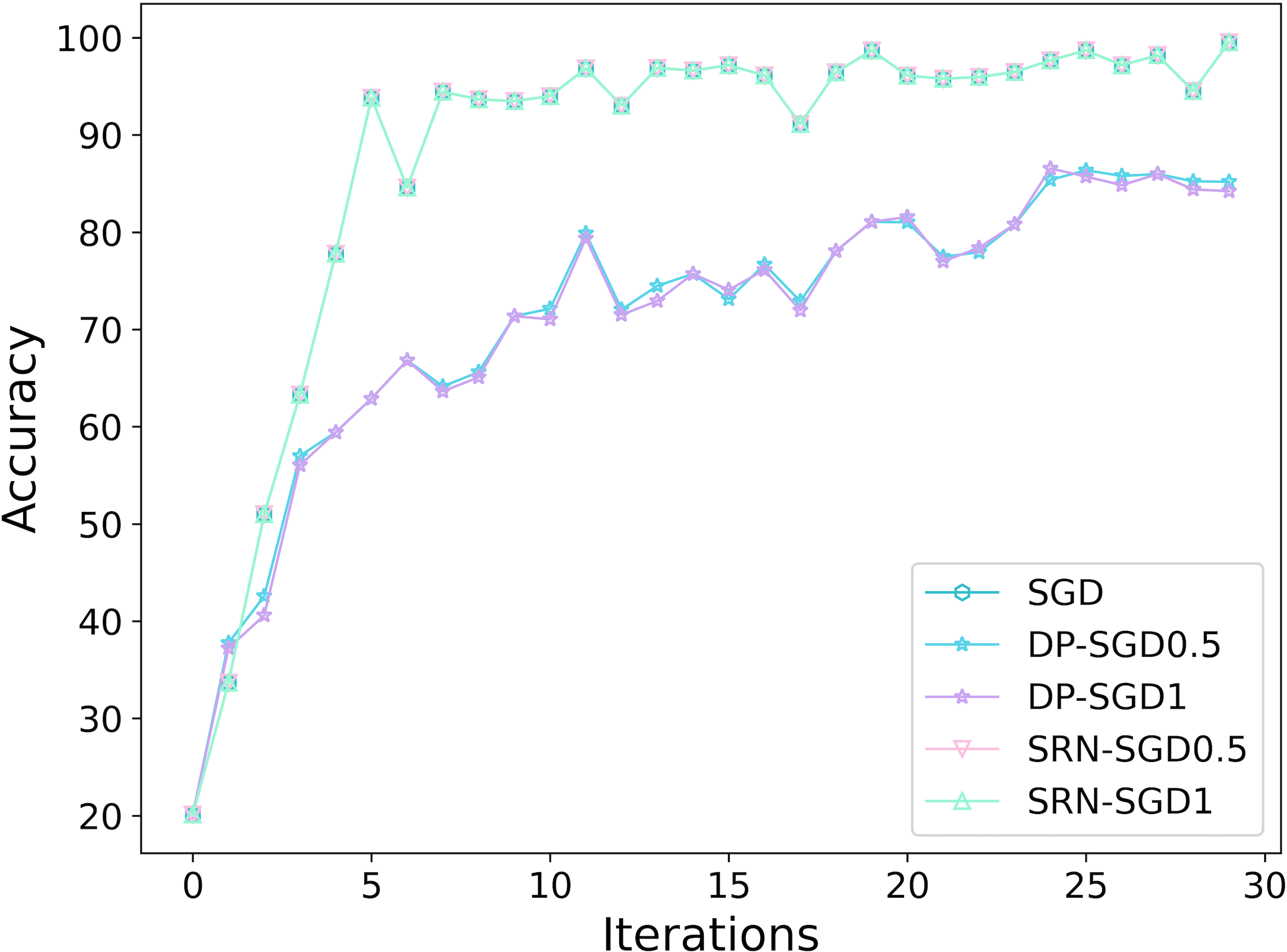}
		\subcaption{$\sigma=0.5,1$}
		\label{mnist-lr0.2-sigma-iterations-0.5-1}
	\end{minipage}
	\begin{minipage}[t]{0.3\textwidth}
		\centering
		\includegraphics[width = 1.8in]{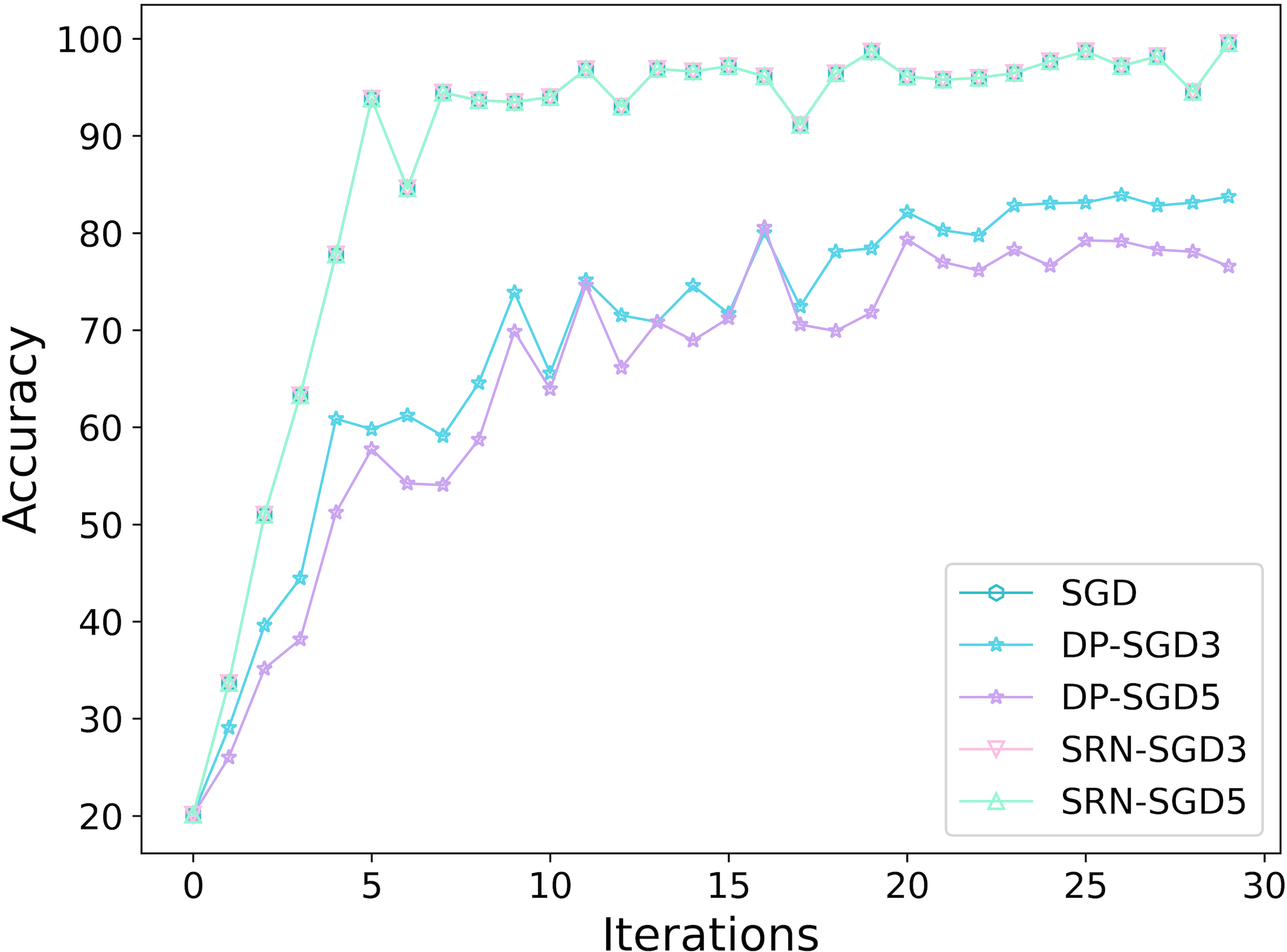}
		\subcaption{$\sigma=3,5$}
		\label{mnist-lr0.2-sigma-iterations-3-5}
	\end{minipage}
	\begin{minipage}[t]{0.3\textwidth}
	\centering
		\includegraphics[width = 1.8in]{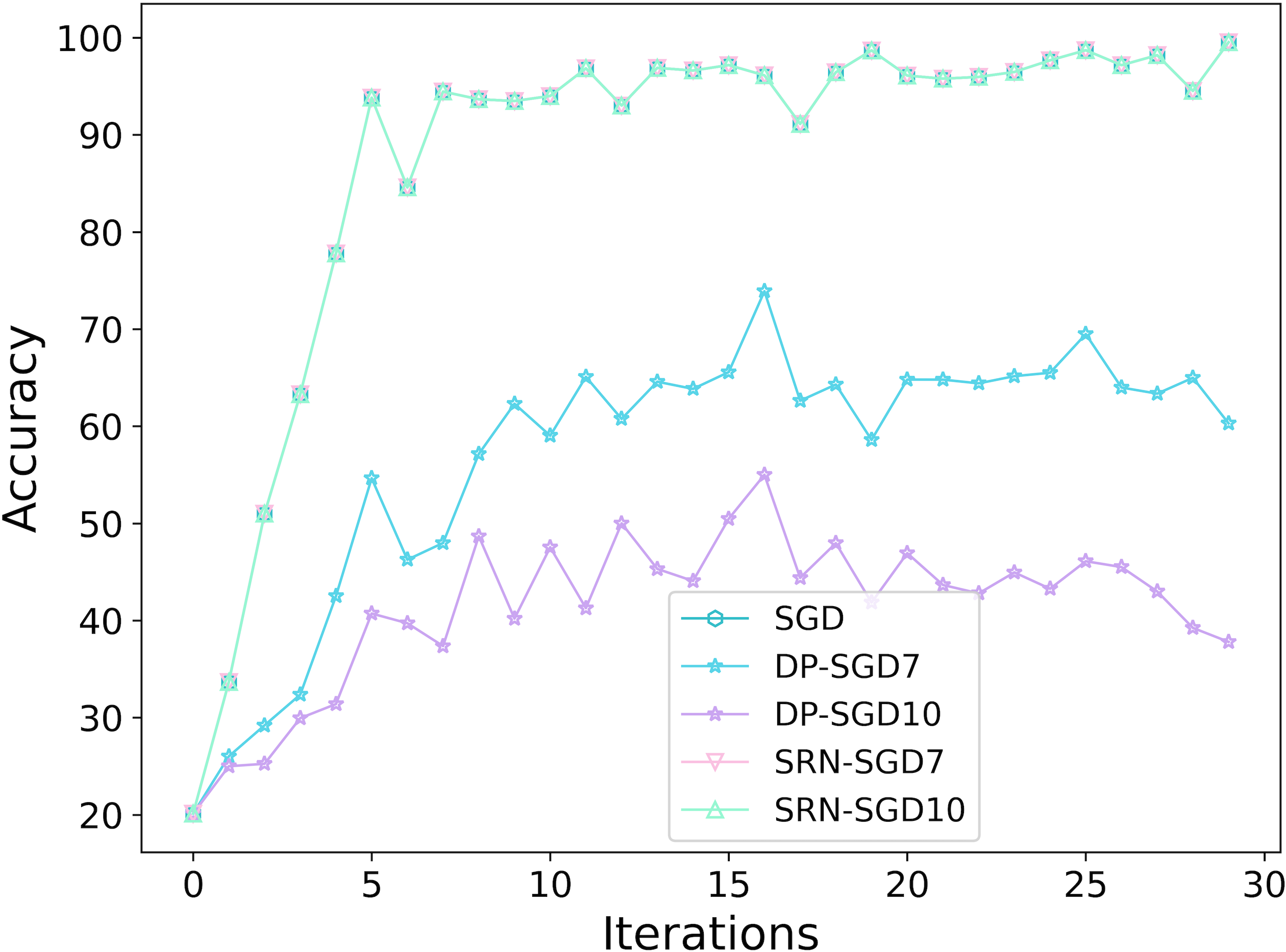}
		\subcaption{$\sigma=7,10$}
		\label{mnist-lr0.2-sigma-iterations-7-10}
	\end{minipage}
	\caption{Test Accuracy on MNIST Dataset with Learning Rate $\eta=0.2$}
	\label{mnist-lr0.2-sigma-iterations}
\end{figure*}

\subsection{Explaining Experiments}
Explainable AI (XAI) has been proposed to explain why they predict what they predict.
We adopt XAI to interpret the superiority/failure of various models by decomposing them into intuitive components by tracking and understanding the training performance, and visualizing the features.

\subsubsection{Tracking Initial-Phase Training}
To explain why $\dphero$ SGD converges better, we plotted the training convergence process in the initial phase, in which the trained model is near the random initialization. 
Figure~\ref{mnist-lot-iterations-sigma} displays training convergence with varying lot sizes, while Figure~\ref{mnist-lr0.2-sigma-iterations} shows training convergence when the learning rate increases to $0.2$.
Both Figure~\ref{mnist-lot-iterations-sigma} and Figure~\ref{mnist-lr0.2-sigma-iterations} confirm that $\dphero$ SGD tracks the SGD training tracks more tightly in the very beginning. 
Recall that a typical model training starts from the random initialization towards a stable status, which means fewer features are learned in the beginning.
Thus, we expect relatively less noise to protect the ``randomized'' model, which learns a limited number of features, to mitigate and destroy the typical training convergence.
Combining with Figure~\ref{cifarsigma710}, we know that model collapse would happen when sufficient noise is assigned to enough features learned from the training data.

\subsubsection{Visualizing DP Training}
Given high-resolution and precise class discrimination, we apply Grad-CAM~\cite{iccv/SelvarajuCDVPB17} to show visual results on DP training.
In Grad-CAM~\cite{iccv/SelvarajuCDVPB17}, the class-discriminative localization map of width $u$ and height $v$ for any class $c$ is defined to be $\mathcal{L}_{\mathsf{GradCAM}} = \mathsf{ReLU}(\sum_k \alpha_k^cA^k)$.
Here, the weight $\alpha_k^c$ represents a partial linearization of the  downstream feature map activation $A$.
In our experiments, we adopt GridCam~\cite{iccv/SelvarajuCDVPB17} for interpreting/visualizing how DP noise affects model training.
In a model training process, GridCam is employed to visualize explanations of learning features, with or without DP noise.

GridCam~\cite{iccv/SelvarajuCDVPB17} can coarsely locate the important regions in the image for predicting the concept, e.g.,  ``dog'' in a classification network.
Figure~\ref{heat_dog} visualizes the heat map of  training with $\dphero$ SGD compared with Figure~\ref{heat_dog_nonoise}.
$\dphero$ SGD training still maintains the representation ability to locate the important objects.
That is, the reason for more satisfying accuracy is that  the noise added to the gradients could not affect on models' ability for relatively accurate visualization in a statistical manner, \ie, still protecting individual privacy.

\subsubsection{A Practical View of Privacy Parameters}
Theoretically, DP-SGD allows setting different clipping thresholds $C$ and noise scales $\sigma$ with varying numbers of training iterations $t$ or different layers.
Although its experiments adopt the fixed value $\sigma^2= {2/\epsilon^2}\cdot\log ({1.25}/{\delta})$, $\dphero$ SGD puts a step forward, showing a practical view of adjusting $\sigma$ in every iteration and diverse noise allocation regarding every gradient update.
The added noise is typically sampled from a noise distribution parameterized by $\sigma$. 
Besides, to explore the varying $\sigma$ over diverse features, $\dphero$ SGD still adopts a constant clipping value $\mathsf{Cp}$ as in DP-SGD.

$\dphero$ SGD  assigns $\sigma$ as a variable during DP training. 
As for unbiased noise distribution, $\mu=0$ holds at every execution of sampling noise.
In probability theory, the sum of multiple independent normally distributed random variables is normal, and its variance is the sum of the two variances.
We use this conclusion to assign the $\sigma$ over diverse features at each training iteration $t$.
If we regard all assigned $\sigma$ at each iteration as a matrix, all entries in this matrix vary at different iterations.
The parameter configuration at every iteration follows Theorem~\ref{the:dpsgd-para}, supporting linearity relation to value in $\svec$ in $\dphero$ SGD.
 Although the theoretical expectation of introducing Gaussian noise with $0$ mean value remains identical to the clean model, practical training shifts the expected results to some extent. 

 \begin{figure}[!h]
	\centering
	\includegraphics[width = 0.43\textwidth]{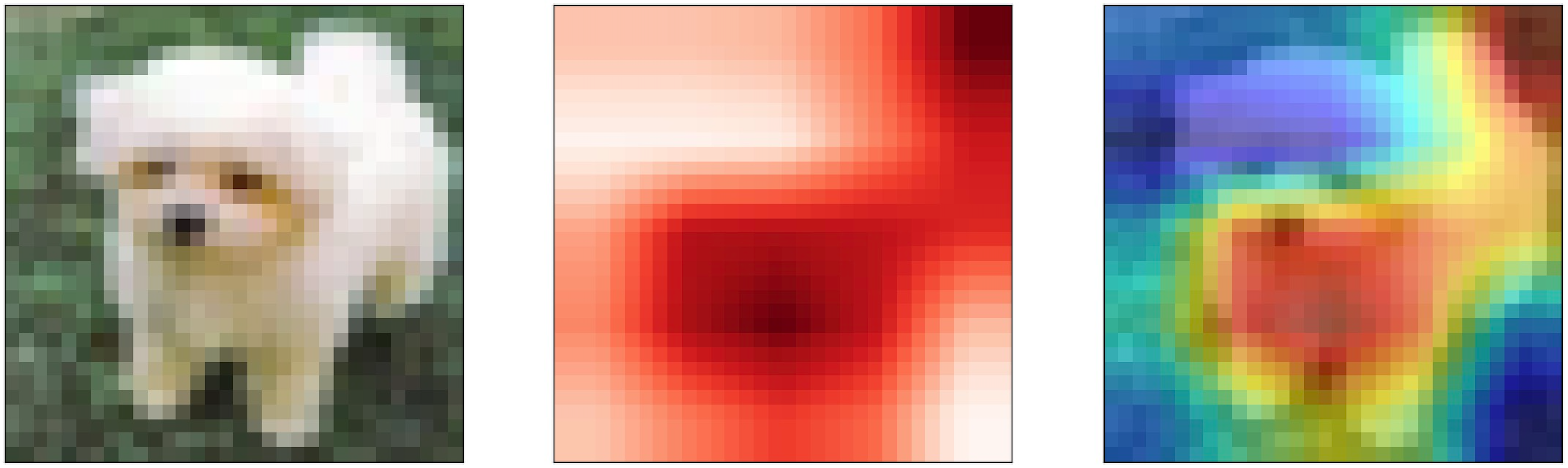}\hfil
	\includegraphics[width = 0.43\textwidth]{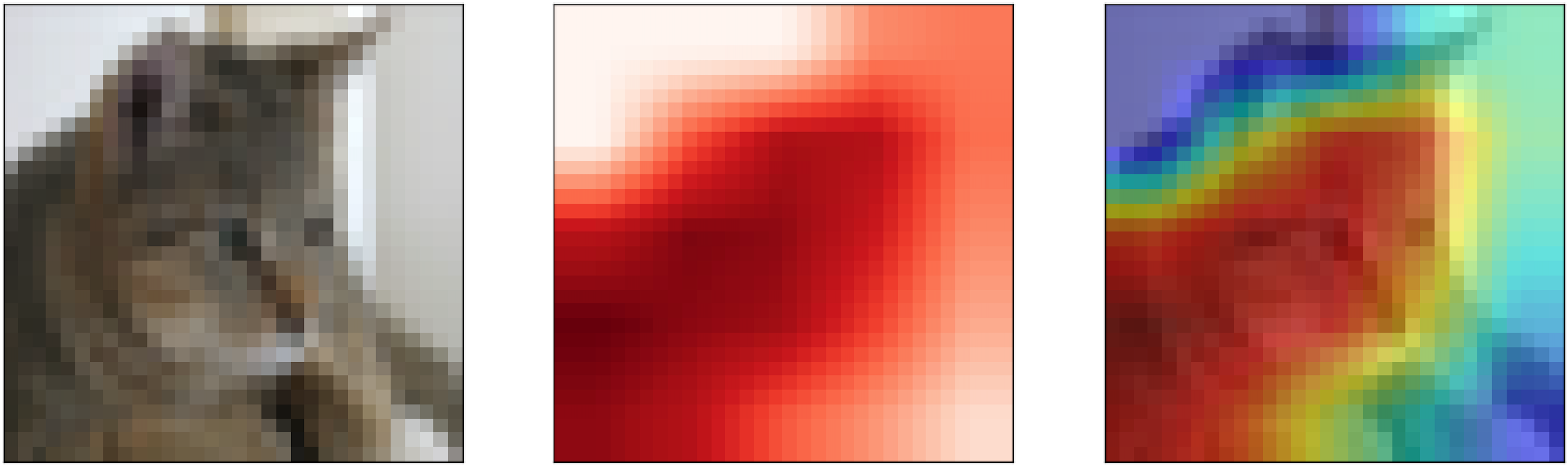}
	\includegraphics[width = 0.43\textwidth]{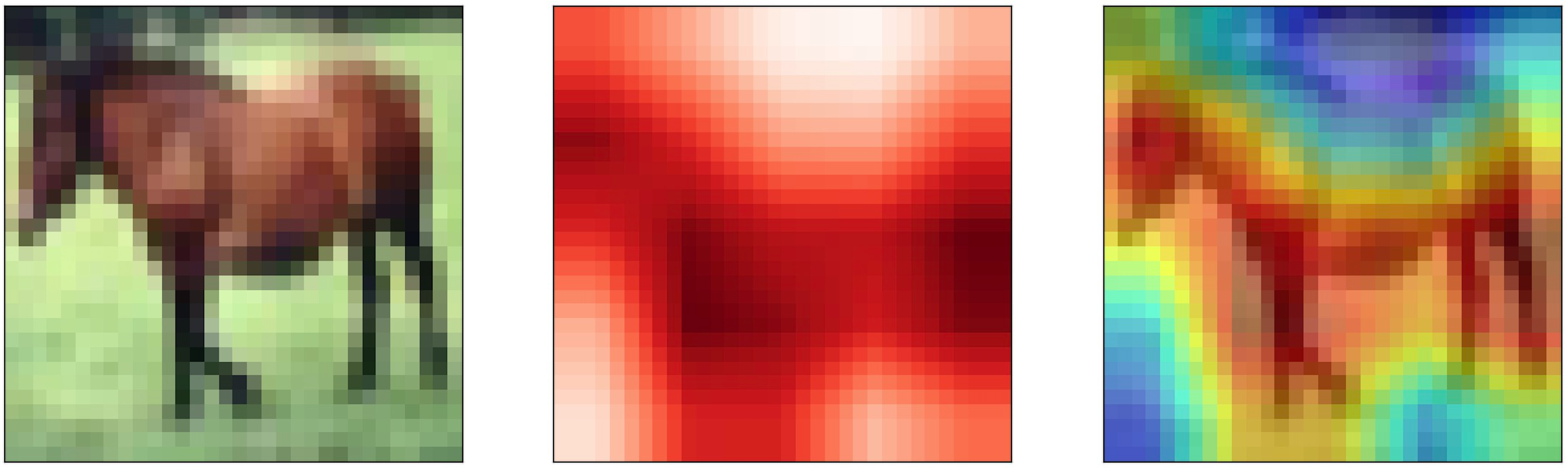}\hfil
	\includegraphics[width = 0.43\textwidth]{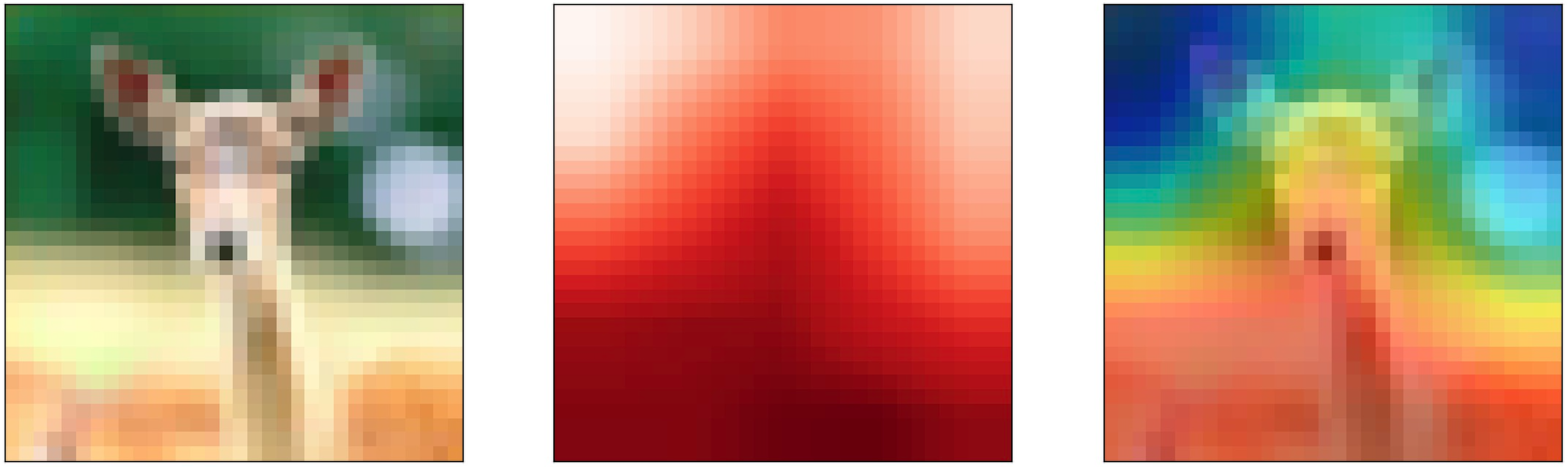}
	\caption{Heat Map for Visual Representation via $\dphero$ SGD}
	\label{heat_dog}
\end{figure}

\begin{figure}[!h]
	\centering
	\includegraphics[width = 0.43\textwidth]{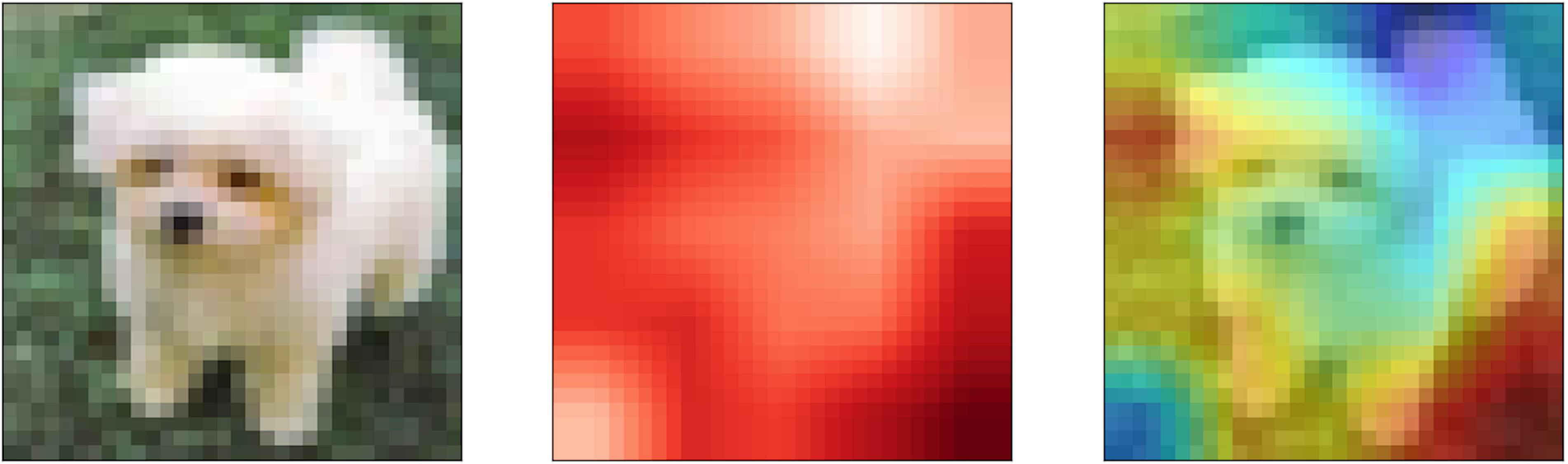}\hfil
	\includegraphics[width = 0.43\textwidth]{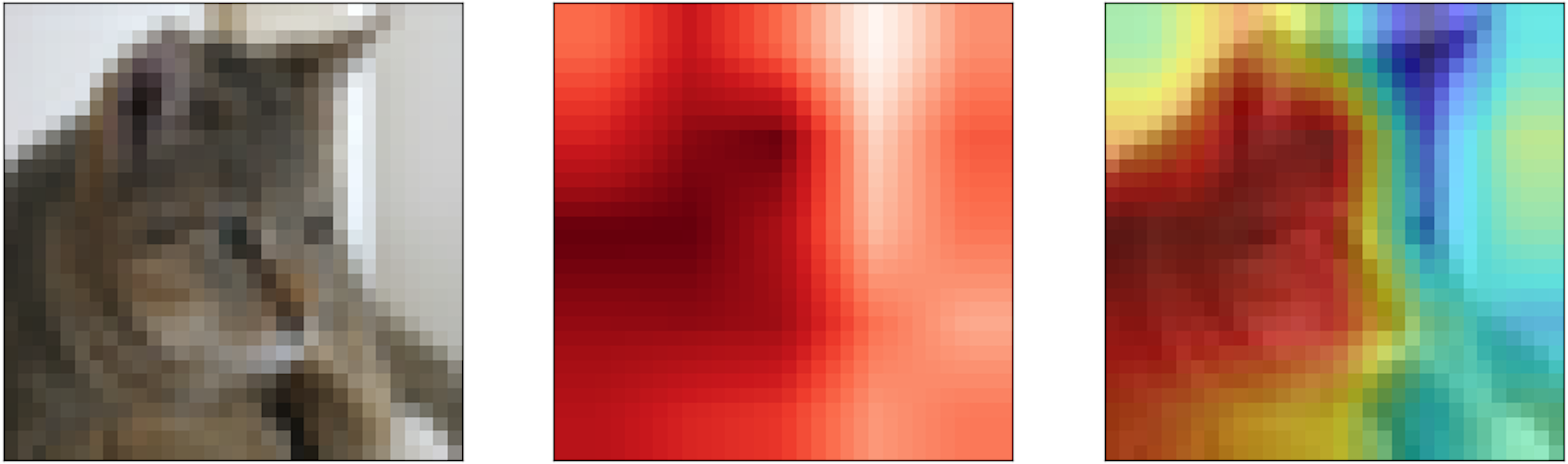}
	\includegraphics[width = 0.43\textwidth]{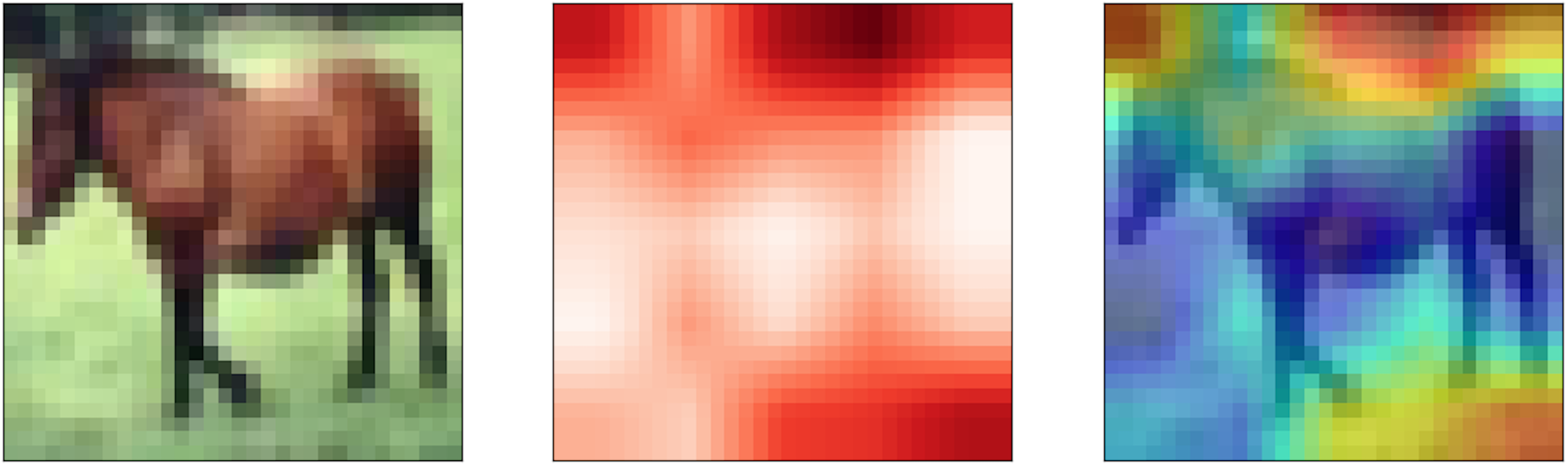}\hfil
	\includegraphics[width = 0.43\textwidth]{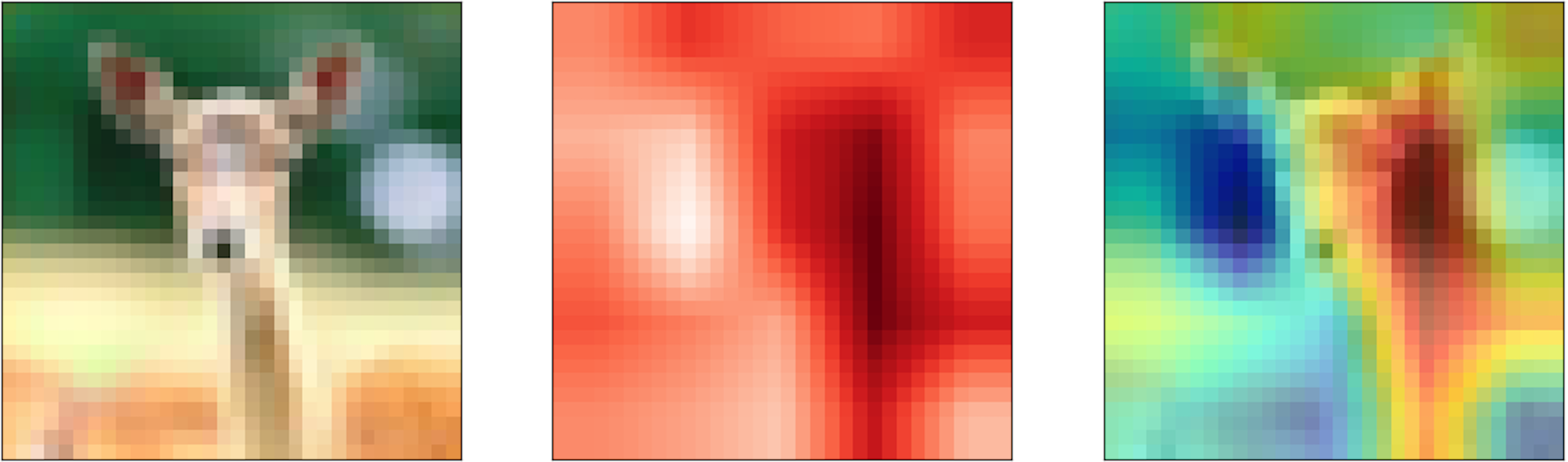}
	\caption{Heat Map for Visual Representation via DP-SGD}
	\label{heat_dog_nonoise}
\end{figure}

\subsubsection{Understanding of Improved Model Convergence}
Motivated by utility improvement, we perform repeated experiments similar to $\S$~\ref{sec::model_utility} to attain the relation between model training and  noise heterogeneity in empiricism.
We repeatedly train an identical model given various heterogeneity (adjust noise scales to diverse model parameters for early-stage tests) and witness the corresponding phenomenon in the convergence process.
Pure SGD training could attain the best accuracy and converge fastest while training with DP-SGD slows down the convergence constrained by identical remaining configurations.
Even after the model's convergence, DP-SGD training can not reach the highest accuracy as pure SGD training. 

For testing the $\dphero$ SGD, we adjust  noise allocations via PCA by injecting them into different model parameters and gradients within an identical privacy budget constraint. 
Accordingly, we could attain some convergence statuses that show lower convergence performance yet better than DP-SGD.
In practical training, utility loss can be interpreted to be convergence retard and degrading accuracy.
Improving model utility could be explained as follows: Given an identical privacy budget, a feasible solution can always exist in a region that is upper-bounded by the ground truth and lower-bounded by fixed noise perturbation.

\subsection{Further Discussion}
We explore the limitations of our work and point out the potential future works below.

\noindent
\textit{1). Speed up $\dphero$ SGD.} We observe the computation costs of PCA over a large parameter matrix are not lightweight enough.
The computational cost for $\mathsf{SVec}$ relies on the size of  the inputting matrix. 
The block-wise computation may simplify initializing a full-rank matrix as basis vectors.
Partitioning the parameter matrix into multiple blocks could speed up training in parallel; however, it may hurt the pre-existing on-the-whole knowledge stored in the current model.
Another direction is to consider a computation-light method of extracting the pre-existing knowledge learned in the current model.

\noindent
\textit{2). Architecture-specified construction.} 
To acquire a new perspective of improving model utility, the proposed construction is a feasible solution  but is not optimal.  
Although the trainable model could be regarded as a representation of knowledge extracted from diverse features and private data, different parameters are structured with the constraint of model initialization. 
At each backpropagation, we regard the model as a matrix in which each entry feeds with the values of model parameters, overlooking the effect of model structure.
In the future, instead of a generic solution, we would like to explore an architecture-specified construction of $\dphero$ SGD.


 \section{Related Works}
 \label{app::related}
 \subsection{Differential Privacy for Deep Learning} 
 Differential privacy has emerged as a solid solution to safeguard privacy in the field of deep learning.
Differential privacy (DP) for deep learning can be classified into four directions: input perturbation~\cite{input,dpsgd1}, output perturbation~\cite{iclr/PapernotSMRTE18}, objective perturbation \cite{objective1,sp/IyengarNSTTW19}, and utility optimization~\cite{corr/abs-1812-06210,ccs/AbadiCGMMT016,nips/ChenWH20,sp/Yu0PGT19}, showing valuable insights in the aspects of theory and practice.
DP could quantify to what extent individual privacy (\ie, whether a data point contributes to the training model) in a statistical dataset is preserved while releasing the established model trained over the specific dataset.
Typically, DP learning has been accomplished by applying the unbiased Gaussian noise to the gradient descent updates, a notable example of DP-SGD~\cite{ccs/AbadiCGMMT016}.
To be specific, DP-SGD adds the i.i.d. noise sampled from Gaussian distribution to model gradients to protect example-level training data involved in the training process in every iteration.

The noise-adding mechanism has been widely adopted in various  learning algorithms, e.g., image classification and natural language processing.
PATE~\cite{iclr/PapernotSMRTE18} is an approach to providing differentially private aggregation of a teacher-student model.
Due to the property of post-processing~\cite{fttcs/DworkR14}, the student's model is differentially private since it trains over the noisy inputs.
Bayesian differential privacy~\cite{icml/TriastcynF20} takes into account the data distribution for practicality~\cite{nips/JagielskiUO20}. 
By instantiating hypothetical adversaries~\cite{sp/NasrSTPC21}, various threat models are employed to show corresponding levels of privacy leakage from both the views of practitioners and theoreticians.

Privacy auditions and attacks, or cryptographic protection belong to  orthogonal research directions, focusing on the evaluative estimation of the privacy guarantee or cipher-text transmission.
Membership inference attack~\cite{ccs/0001MMBS22} enables detecting the presence or absence of an individual example, implying a lower bound on the privacy parameter $\epsilon$ via statistics~\cite{nips/EsmaeiliMPST21}.
Notably, Steinke~\etal~\cite{nips/0002NJ23} theoretically proves the feasibility of auditing privacy through membership inference on multiple examples simultaneously, elaborating an efficient one-round solution.
Combining different techniques with this work can be promising, while it is out of scope for this work.

\subsection{Privacy-Utility Trade-off}
For acquiring higher utility~\cite{pvldb/SchalerHS23},  recent works explore the adaptive mechanism of DP training from different perspectives.
They try to either reallocate/optimize the privacy budget~\cite{kdd/LeeK18,ccs/MohammadyXHZ0PD20,sp/Yu0PGT19,tit/GengV16,osdi/LuoPTCGL21} or modify the clip-norms~\cite{corr/abs-1908-07643,corr/abs-1812-02890}  of a (set of) fixed noise distribution(s) in each iteration.
Such a branch of work points out a promising direction of adaptivity via redesigning the randomness.
Privacy budget scheduling~\cite{osdi/LuoPTCGL21} improves the utility of differentially private algorithms in various scenarios.
Unlike the aforementioned advances of dynamic noise allocation, our exploration of adjusting noise heterogeneity by model parameters aims to improve the utility of the established model at every iteration rather than optimizing the noise allocation in the range of the whole training process with a constant budget.
Handcrafted features, learned from public data, can improve model utility given a fixed privacy budget \cite{iclr/TramerB21}.
Rather than introducing auxiliary data, we aim  to extract knowledge from protected model parameters without extra data assistance.

Previous analyses have enabled an understanding of utility bounds for DP-SGD mainly in an empirical manner.
Altschuler and Talwar~\cite{nips/AltschulerT22} explored the theory foundation of privacy loss --  how sensitive the output of DP-SGD is.
They solve a tighter utility bound given the privacy loss as a function of the number of iterations, concluding that  after a small burn-in period, running DP-SGD longer leaks no further
privacy. 
In this work, we exploit visual explanation~\cite{iccv/SelvarajuCDVPB17} and theoretical understanding to explore the essence of privacy-utility space.

\section{Conclusion} 
\label{sec::conclusion}
Through theoretical and empirical understanding of privacy-utility space, we extend the research line of improving training performance for DP learning by designing a plug-in optimization for training with DP-SGD.
The proposed $\dphero$ is a versatile differential privacy framework incorporating the heterogeneous DP noise.
The primary innovation of $\dphero$ is its ability to utilize the knowledge embedded in previously trained models to guide the subsequent distribution of noise heterogeneity, thereby optimizing its utility. 
Building on the foundation of $\dphero$, we introduce a heterogeneous version of DP-SGD, in which the noise introduced into the gradients varies. We have carried out extensive experiments to validate and elucidate the efficacy of $\dphero$. 
Accordingly, we provide insights on enhancing the privacy-utility space by learning from the pre-existing leaked knowledge encapsulated in the previously trained models.
Broadly, we point out a new way of thinking about model-guided noise allocation for optimizing SGD-dominated convergence under the DP guarantee.
Besides, we explore explaining DP training via visual representation, reasoning the improved utility.
Such an explainable view could benefit from understanding DP protection more vividly, for potentially being against attacks better.

\bibliography{aaai2026}


\end{document}